\newtheorem{problem}{Problem}
\newtheorem{definition}{Definition}
\newtheorem{lemma}{Lemma}
\newtheorem{corollary}{Corollary}
\begin{document}

\title{Maximum Degree-Based Quasi-Clique Search via an Iterative Framework}

\author{Hongbo Xia}
\affiliation{
    \institution{Harbin Institute of Technology, Shenzhen}
    \city{Shenzhen}
    \country{China}
}
\email{24s151167@stu.hit.edu.cn}

\author{Kaiqiang Yu}
\affiliation{
    \institution{Nanyang Technological University}
    \city{Singapore}
    \country{Singapore}
}
\email{kaiqiang002@e.ntu.edu.sg}

\author{Shengxin Liu}
\affiliation{
    \institution{Harbin Institute of Technology, Shenzhen}
    \city{Shenzhen}
    \country{China}
}
\email{sxliu@hit.edu.cn}
\authornote{Shengxin Liu is the corresponding author.}
    
\author{Cheng Long}
\affiliation{
    \institution{Nanyang Technological University}
    \city{Singapore}
    \country{Singapore}
}
\email{c.long@ntu.edu.sg}

\author{Xun Zhou}
\affiliation{
  \institution{Harbin Institute of Technology, Shenzhen\\Pengcheng Laboratory}
  \city{Shenzhen}
  \country{China}
}
\email{zhouxun2023@hit.edu.cn}

\begin{abstract}
Cohesive subgraph mining is a fundamental problem in graph theory with numerous real-world applications, such as social network analysis and protein-protein interaction modeling. Among various cohesive subgraphs, the $\gamma$-quasi-clique is widely studied for its flexibility in requiring each vertex to connect to at least a $\gamma$ proportion of other vertices in the subgraph. However, solving the maximum $\gamma$-quasi-clique problem is NP-hard and further complicated by the lack of the hereditary property, which makes designing efficient pruning strategies challenging. Existing algorithms, such as \texttt{DDA} and \texttt{FastQC}, either struggle with scalability or exhibit significant performance declines for small values of $\gamma$.
In this paper, we propose a novel algorithm, \texttt{IterQC}, which reformulates the maximum $\gamma$-quasi-clique problem as a series of $k$-plex problems that possess the hereditary property. \texttt{IterQC} introduces a non-trivial iterative framework and incorporates two key optimization techniques: (1) the \emph{pseudo lower bound (pseudo LB)} technique, which leverages information across iterations to improve the efficiency of branch-and-bound searches, and (2) the \emph{preprocessing} technique that reduces problem size and unnecessary iterations. Extensive experiments demonstrate that \texttt{IterQC} achieves up to four orders of magnitude speedup and solves significantly more graph instances compared to state-of-the-art algorithms \texttt{DDA} and \texttt{FastQC}.
\end{abstract}

\begin{CCSXML}
<ccs2012>
   <concept>
       <concept_id>10003752.10003809.10003635</concept_id>
       <concept_desc>Theory of computation~Graph algorithms analysis</concept_desc>
       <concept_significance>500</concept_significance>
       </concept>
 </ccs2012>
\end{CCSXML}

\ccsdesc[500]{Theory of computation~Graph algorithms analysis}

\keywords{Cohesive subgraph mining, quasi-clique, iterative framework}


\maketitle

\section{Introduction}\label{sec-intro}
Cohesive subgraph mining is a fundamental problem in graph theory with numerous real-world applications~\cite{Chang2018cosub, Fang2020cosub, Huang2019cosub, Lee2010cosub}. Unlike cliques, which require complete connectivity, cohesive subgraphs provide more flexibility by allowing the absence of certain edges. In real-world graphs, data is often noisy and incomplete, making fully connected cliques impractical. By allowing a certain degree of edge absence, the relaxation offers a robust way to identify strongly connected substructures while accommodating missing or uncertain relationships. Various approaches have been proposed to relax the constraints of cohesive subgraph definitions, such as $k$-plex~\cite{seidman1978kplex}, $k$-core~\cite{Seidman1983}, and $k$-defective clique~\cite{yu2006defective}.
One widely studied definition is the $\gamma$-quasi-clique~\cite{MIH99}. Given a fraction $\gamma$ between 0 and 1, a $\gamma$-quasi-clique requires that every vertex in the subgraph is directly connected to at least a $\gamma$ proportion of the other vertices in the subgraph. 
Cohesive subgraph mining in the context of $\gamma$-quasi-clique has recently attracted increasing interest~\cite{zeng2007out,LW08,JP09,Pastukhov2018gamma,guo2020scalable,sanei-mehri2021largest,khalil2022parallel,yu2023fast}. It has numerous applications in real-world graph structure analysis, including social network analysis~\cite{Bedi2016CommunityDetection,guo2022directed,Fang2020cosub} and the modeling of protein-protein interaction networks~\cite{Suratanee2014ProteinInteractions,Bader2003,BB09}, which captures relationships between proteins. For example, in~\cite{pei2005crossgraph}, researchers identify biologically significant functional groups by mining large $\gamma$-quasi-cliques that meet a minimum size threshold across various protein-protein and gene-gene interaction networks. The idea is that within a functional protein group, most members interact frequently, suggesting a high likelihood of forming a quasi-clique~\cite{pei2005crossgraph}. Similarly, in another case study~\cite{guo2022directed}, $\gamma$-quasi-cliques are utilized to uncover meaningful communities within networks derived from publication data.

In this paper, we study the \emph{maximum $\gamma$-quasi-clique} problem, which aims to identify the $\gamma$-quasi-clique with the largest number of vertices in a graph. As a natural extension of the classic maximum clique problem, this problem is unsurprisingly NP-hard~\cite{MIH99,Pastukhov2018gamma}.
An even more challenging lies in the fact that the $\gamma$-quasi-clique lacks the hereditary property, i.e., subgraphs of a $\gamma$-quasi-clique are not guaranteed to also be $\gamma$-quasi-cliques. This limitation complicates the design of efficient pruning methods, unlike problems for $k$-plex or $k$-defective clique, where the hereditary property enables effective optimizations.
In this paper, we focus on designing practically efficient exact algorithms to tackle this challenging problem.

\noindent \underline{\textbf{Existing state-of-the-art algorithms.}} 
The state-of-the-art algorithm for the maximum $\gamma$-quasi-clique problem is \texttt{DDA}~\cite{Pastukhov2018gamma}, which combines graph properties with techniques from operations research. A key feature of \texttt{DDA} is its iterative \emph{enumeration (or estimation)} of the degree of the minimum-degree vertex in the maximum $\gamma$-quasi-clique. For each candidate degree, it invokes an integer programming (IP) solver multiple times to solve equivalent sub-problems and identify the final solution. Unlike other approaches relying solely on IP formulations or branch-and-bound techniques based on graph properties, \texttt{DDA} effectively integrates both. However, it has two limitations: (1) it naively enumerates possible minimum degree values, potentially requiring a large number of trials (e.g., \texttt{DDA} may enumerate $O(n)$ values, where $n$ is the number of vertices in the graph); (2) the IP solver operates as a black box, which makes it difficult to optimize for this specific problem.

Another closely related algorithm, \texttt{FastQC}~\cite{yu2023fast}, is the state-of-the-art for enumerating all maximal $\gamma$-quasi-cliques in a graph. \texttt{FastQC} uses a divide-and-conquer strategy with efficient pruning techniques that leverage intrinsic graph properties to systematically and effectively enumerate all maximal $\gamma$-quasi-cliques. It is clear that the largest maximal $\gamma$-quasi-clique identified corresponds to the maximum $\gamma$-quasi-clique. However, since \texttt{FastQC} is not specifically designed to solve the maximum $\gamma$-quasi-clique problem, even with additional pruning methods tailored for the maximum solution, its efficiency remains suboptimal. Moreover, our experimental results in Section~\ref{sec:exp} show that the efficiency of the \texttt{FastQC} algorithm significantly decreases when the value of $\gamma$ is relatively small.

In summary, while both \texttt{DDA} and \texttt{FastQC} have practical strengths, they fail to fully address the efficiency challenges of the maximum $\gamma$-quasi-clique problem. Their limitations in scalability and handling smaller values of $\gamma$ highlight the need for more specialized solutions.

\noindent \underline{\textbf{Our new methods.}}
In this work, we propose a novel algorithm, \texttt{IterQC}, which reformulates the maximum $\gamma$-quasi-clique problem as a series of $k$-plex problems. Notably, the $k$-plex possesses the hereditary property, enabling the design of more effective pruning methods. 
To achieve this, we introduce a non-trivial \emph{iterative framework} that solves the maximum $\gamma$-quasi-clique problem through a carefully designed iterative procedure, leveraging repeated calls to a maximum $k$-plex solver~\cite{wang2023kplex,chang2024maximum,gao2024maximum}.
Building on this novel iterative framework, we further propose advanced optimization techniques, including the \emph{pseudo lower bound (pseudo LB)} technique and the \emph{preprocessing} technique. The pseudo LB technique effectively coordinates information between the inner iterations and the outer iterative framework, thereby boosting the overall efficiency of the iterative procedure. Meanwhile, the preprocessing technique performs preliminary operations before the iterative framework begins, reducing both the problem size by removing redundant vertices from the graph and reducing the number of iterations by skipping unnecessary iterations. With these optimizations, our proposed algorithm \texttt{IterQC} equipped with these optimizations achieves up to four orders of magnitude speedup and solves more instances compared to the state-of-the-art algorithms \texttt{DDA} and \texttt{FastQC}.

\noindent \underline{\textbf{Contributions.}} Our main contributions are as follows.
\begin{itemize}[leftmargin=*]
    \item We first propose a basic iterative framework, which correctly transforms non-hereditary problems into multiple problem instances with the hereditary property. (Section~\ref{sec:basic-iterative})
    \item Based on the basic iterative method, we design an improved algorithm \texttt{IterQC} with two key components: 
    (1) The \emph{pseudo lower bound (pseudo LB)} technique, which utilizes information from previous iterations to generate a pseudo lower bound. This technique optimizes the branch-and-bound search, improving performance on challenging instances. 
    (2) The \emph{preprocessing} technique, which computes lower and upper bounds of the optimum size to remove unpromising vertices from the graph, potentially reducing the number of iterations. (Section~\ref{sec:improved})
    \item We conduct extensive experiments to compare \texttt{IterQC} with the state-of-the-art algorithms \texttt{DDA} and \texttt{FastQC}. The results show that (1) on the 10th DIMACS and real-world graph collections, the number of instances solved by \texttt{IterQC} within 3 seconds exceeds the number solved by the other two baselines within 3 hours; (2) on 30 representative graphs, \texttt{IterQC} is up to four orders of magnitude faster than the state-of-the-art algorithms. (Section~\ref{sec:exp})
\end{itemize}

\section{Preliminaries}\label{sec:prelim}
Let $G=(V, E)$ be an undirected simple graph with $|V| = n$ vertices and $|E|=m$ edges.
We denote $G[S]$ of $G$ as the subgraph induced by the set of vertices $S$ of $V$.
We use $d_G(v)$ to denote the degree of $v$ in $G$.
Let $g$ be an induced subgraph of $G$. We denote the vertex set and the edge set of $g$ as $V(g)$ and $E(g)$, respectively.
In this paper, we focus on the cohesive subgraph of $\gamma$-quasi-clique defined below.
\begin{definition}
Given a graph $G=(V, E)$ and a real number $0 < \gamma \leq 1$, an induced subgraph $g$ is said to be a {\em $\gamma$-quasi-clique ($\gamma$-QC)} if $\forall v \in V(g)$, $d_{g}(v) \geq \gamma \cdot (|V(g)|-1)$.
\end{definition}

We are ready to present our problem in this paper.
\begin{problem}[Maximum $\gamma$-quasi-clique]
Given a graph $G=(V, E)$ and a real number $\gamma \in [0.5, 1]$, the Maximum $\gamma$-quasi-clique Problem (\texttt{MaxQC}) aims to find the largest $\gamma$-quasi-clique in $G$.
\end{problem}
We first note that \texttt{MaxQC} has been proven to be NP-hard~\cite{MIH99,Pastukhov2018gamma}.
Moreover, we let $g^*$ be the largest $\gamma$-QC in $G$ and $s^*=|V(g^*)|$ be the size of this maximum solution.
We also note that, following previous studies~\cite{guo2020scalable,khalil2022parallel,pei2005crossgraph,sanei-mehri2021largest,yu2023fast}, we consider \texttt{MaxQC} with $\gamma \geq 0.5$ since (1) conceptually, a $\gamma$-QC with $\gamma < 0.5$ may not be cohesive in practice (note that each vertex in such a $\gamma$-QC may connect to fewer than half of the other vertices); (2) technically, a $\gamma$-QC with $\gamma \geq 0.5$ has a small diameter of at most 2~\cite{pei2005crossgraph}.

\section{Our Basic Iterative Framework}\label{sec:basic-iterative}
The design of efficient exact algorithms for \texttt{MaxQC} presents multiple challenges. \textbf{First}, \texttt{MaxQC} has been proven to be NP-hard~\cite{MIH99,Pastukhov2018gamma}. Existing studies on similar maximum cohesive subgraph problems often adopt the branch-and-bound algorithmic framework~\cite{chang2022efficient,chang2024maximum,wang2023kplex}.
\textbf{Second}, the $\gamma$-quasi-clique is non-hereditary~\cite{yu2023fast}, i.e., an induced subgraph of a $\gamma$-QC is not necessarily a $\gamma$-QC. 
This non-hereditary property of the $\gamma$-QC complicates the design of effective bounding techniques within the branch-and-bound framework.

To address these challenges, our algorithm adopts an \emph{iterative framework} that solves \texttt{MaxQC} by iteratively invoking a solver for another cohesive subgraph problem.
The design rationale originates from the intention to \emph{convert the non-hereditary subgraph problem into multiple instances of a hereditary subgraph problem, so as to speed up the overall procedure}.

We first introduce another useful cohesive subgraph structure.
\begin{definition}[\cite{seidman1978kplex}]
Given a graph $G=(V, E)$ and an integer $k\geq 1$, an induced subgraph $g$ is said to be a {\em $k$-plex} if $\forall v \in V(g)$, $d_{g}(v) \geq |V(g)|-k$.
\end{definition}

In the literature, a related problem for $k$-plex is defined~\cite{chang2022efficient, Gao2018, jiang2023refined, jiang2021, wang2023kplex, Xiao2017maxkplex, zhou2021improving,Balasundaram2011}, as shown in the following.

\begin{problem}[Maximum $k$-plex]\label{def:plex}
Given a graph $G=(V, E)$ and a positive integer $k$, the Maximum $k$-plex Problem aims to find the largest $k$-plex in $G$. 
\end{problem}

Based on the maximum $k$-plex problem, we next describe our basic iterative framework for the \texttt{MaxQC} problem. 
To simplify the description, we define the following two functions.

\begin{itemize}[leftmargin=*]
        \item \texttt{get-k}$(x) := \lfloor (1 - \gamma) \cdot (x - 1)\rfloor + 1$, which takes a number $x$ of vertices as input and returns a appropriate value of $k$;
        \item \texttt{solve-plex}$(y) := \text{the size of the largest } y\text{-plex in } G$, which takes a value $y$ as input.
\end{itemize}

\begin{algorithm}[t]
    \caption{A Basic Iterative Framework}\label{alg:basic-framework}
    \KwIn{A graph $G=(V,E)$ and a real value $0.5 \leq \gamma \leq  1$}
    \KwOut{The size of the maximum $\gamma$-quasi-clique in $G$}  
    $s_0 \gets |V|$; $k_1 \gets \texttt{get-k}(s_0)$; $i\gets 1$\;
    \While{true}{
        $s_i \gets$ \texttt{solve-plex}($k_i$)\;
        \If{$k_i = \emph{\texttt{get-k}}(s_i)$}{
            $s^* \gets s_i$; \Return{$s^*$}\;
        }
        $k_{i+1} \gets \texttt{get-k}(s_i)$; $i \gets i+1$\;
    }
\end{algorithm}

We present our basic iterative framework for solving \texttt{MaxQC} in Algorithm~\ref{alg:basic-framework}. Specifically, Algorithm~\ref{alg:basic-framework} initializes $s_0$ as the number of vertices in the graph $G$, computes the corresponding value of $k_1$ using the function \texttt{get-k}, and sets the index $i$ as 1 (Line 1). The algorithm then iteratively computes $s_i$ by solving the maximum $k_i$-plex problem via the function \texttt{solve-plex}, while updating $k_i$ via the function \texttt{get-k} in each iteration (Lines 2-6). The iteration terminates when $k_i = \texttt{get-k}(s_i)$ in Line 4, at which point $s^*$, the size of the maximum $\gamma$-quasi-clique, is returned in Line 5.
Note that Algorithm~\ref{alg:basic-framework} returns only the size of the maximum $\gamma$-quasi-clique, but can be easily modified to output the corresponding subgraph.

\noindent \underline{\textbf{Correctness.}}
We now prove the correctness of our basic iterative framework in Algorithm~\ref{alg:basic-framework}.
For simplicity in the proof, we assume that the termination condition (Line 4) of Algorithm~\ref{alg:basic-framework} is met at the $(p+1)$-st iteration. Then, in our iterative framework, we can obtain two sequences $\{s_0,s_1,\ldots,s_{p},s_{p+1}\}$ and $\{k_1,k_2,\ldots,k_{p},k_{p+1}\}$, where $k_{p+1}=\texttt{get-k}(s_{p+1})$.
In our proof, we will show that the sequence of $\{s_0,s_1,\ldots,s_{p}\}$ generated by Algorithm~\ref{alg:basic-framework} is strictly decreasing and $s_{p+1}$ corresponds to the largest $\gamma$-quasi-clique in the graph, i.e, $s_{p+1} = s^*$.
We first consider the special case that the input graph $G$ is already a $\gamma$-quasi-clique.
\begin{lemma}\label{lem:input-qc}
    When the input graph $G=(V,E)$ is a $\gamma$-QC, Algorithm~\ref{alg:basic-framework} correctly returns $|V|$ as the optimum solution.
\end{lemma}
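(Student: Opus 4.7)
The plan is to show that the while loop terminates on its very first iteration, and that the value it returns is indeed $|V|$. The strategy has two parts: (i) verify that the single call to \texttt{solve-plex}$(k_1)$ returns $s_1=|V|$, and (ii) verify that the termination test $k_1=\texttt{get-k}(s_1)$ then succeeds.

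For (i), I would start from the definition of $\gamma$-QC: since $G$ itself is a $\gamma$-QC, every $v\in V$ satisfies $d_G(v)\geq \gamma(|V|-1)$. Because $d_G(v)$ is an integer, this strengthens to $d_G(v)\geq \lceil \gamma(|V|-1)\rceil$, and hence
\[
|V|-1-d_G(v)\;\leq\;|V|-1-\lceil \gamma(|V|-1)\rceil\;=\;\lfloor (1-\gamma)(|V|-1)\rfloor.
\]
Now $s_0=|V|$ and $k_1=\texttt{get-k}(s_0)=\lfloor (1-\gamma)(|V|-1)\rfloor+1$, so the inequality above rearranges to $d_G(v)\geq |V|-k_1$ for every $v\in V$, meaning $G$ is a $k_1$-plex. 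Thus the maximum $k_1$-plex of $G$ has size at least $|V|$; since it trivially has size at most $|V|$, we conclude $s_1=\texttt{solve-plex}(k_1)=|V|$.

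For (ii), I would simply observe that $\texttt{get-k}$ depends only on its argument, and $s_1=|V|=s_0$, so $\texttt{get-k}(s_1)=\texttt{get-k}(s_0)=k_1$. The test on Line 4 therefore passes on the first iteration, and Line 5 returns $s^*=s_1=|V|$. Finally, the value $|V|$ is indeed the optimum because $G$ itself, being a $\gamma$-QC of size $|V|$, is the largest $\gamma$-QC contained in $G$.

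I expect the only subtlety to be the floor/ceiling manipulation in step (i): one must be careful to pass through the integrality of $d_G(v)$ to convert $\gamma(|V|-1)$ (which may be non-integral) into an actual integer lower bound, so that the floor in $\texttt{get-k}$ matches up cleanly with the $k$-plex degree condition $d_G(v)\geq |V|-k$. Once that bookkeeping is done, the remainder of the proof is immediate.
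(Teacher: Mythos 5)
Your proposal is correct and follows essentially the same route as the paper's proof: use the integrality of $d_G(v)$ and the floor/ceiling identity to show $G$ is itself a $k_1$-plex, conclude $s_1=s_0=|V|$, and observe that the termination test then fires immediately. The only (harmless) addition is your explicit closing remark that $|V|$ is optimal because $G$ itself is a $\gamma$-QC, which the paper leaves implicit.
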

The proof of Lemma~\ref{lem:input-qc}, along with other omitted proofs in this section, can be found in Section~\ref{sec:proof-basic}.
\if 0
\begin{proof}
     When $G$ is a $\gamma$-quasi-clique, we have $d_{G}(v) \geq \gamma \cdot (|V| - 1)$, $\forall v \in V$. Since $d_G(v)$ is an integer for each $v \in V$, it follows that $d_{G}(v) \geq \lceil \gamma \cdot (|V| - 1) \rceil \geq |V| - (\lfloor (1 - \gamma) \cdot (|V| - 1)\rfloor + 1) = |V| - k_1$. Thus, $G$ is a $k_1$-plex and $\texttt{solve-k}(k_1) = |V|$, which implies that $s_1 = s_0$. At this point, we have $k_1 = \texttt{get-k}(s_1)$, and the algorithm results in $s_1 = |V|$, which completes the proof. 
\end{proof}
\fi
In the following, we discuss the correctness of the algorithm where $G$ itself is not a $\gamma$-quasi-clique.
We first prove the properties of the sequence for $s$ in the following.

\begin{lemma}\label{lem:seq-s-non-identical}
    For the sequence $\{s_0,s_1,\ldots,s_{p}\}$, two consecutive elements 
    are not identical, i.e., $\forall 0 \leq i \leq p-1$, $s_i \neq s_{i+1}$.
\end{lemma}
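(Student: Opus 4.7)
The plan is to argue by contradiction, drawing only on the bookkeeping rules of Algorithm~\ref{alg:basic-framework} rather than on any structural property of $\gamma$-quasi-cliques or $k$-plexes. I fix an index $i$ with $0 \leq i \leq p-1$, suppose $s_i = s_{i+1}$, and then derive a contradiction with the standing assumption that the termination condition is first met at iteration $p+1$ (so that no iteration with index at most $p$ triggers a return).

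The core observation is that Algorithm~\ref{alg:basic-framework} always assigns $k_{i+1} = \texttt{get-k}(s_i)$---either through Line~1 when $i = 0$ (where $k_1 = \texttt{get-k}(s_0)$), or through Line~6 when $i \geq 1$. Since \texttt{get-k} is a deterministic function of its single argument, the hypothesis $s_i = s_{i+1}$ immediately yields $\texttt{get-k}(s_i) = \texttt{get-k}(s_{i+1})$, and therefore $k_{i+1} = \texttt{get-k}(s_{i+1})$. This equality is precisely the termination test of Line~4 evaluated at iteration $i+1$, so Algorithm~\ref{alg:basic-framework} would return $s_{i+1}$ at that iteration. Because $i \leq p-1$, we have $i+1 \leq p$, contradicting the assumption that termination first occurs at iteration $p+1$.

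I do not foresee a substantive obstacle here: the lemma is essentially a bookkeeping fact about how \texttt{get-k} composes with the update rule, and the only subtlety is that the boundary case $i = 0$ should be handled uniformly with the generic step. This is automatic, since the initialization $k_1 = \texttt{get-k}(s_0)$ in Line~1 has exactly the same form as the in-loop update $k_{i+1} = \texttt{get-k}(s_i)$ in Line~6, so a single argument covers both. The heavier lifting---upgrading this \emph{non-identity} statement to strict monotonicity of $\{s_0, \ldots, s_p\}$ and identifying $s_{p+1}$ with $s^*$---belongs to the subsequent lemmas rather than to this one.
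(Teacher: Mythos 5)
Your argument is correct and is essentially identical to the paper's proof: both assume $s_i = s_{i+1}$ for some $i \leq p-1$, note that $k_{i+1} = \texttt{get-k}(s_i) = \texttt{get-k}(s_{i+1})$, and conclude that the termination test of Line~4 would already fire at iteration $i+1 \leq p$, contradicting the assumption that the algorithm first terminates at iteration $p+1$. Your explicit handling of the $i=0$ boundary case via the Line~1 initialization is a minor but harmless elaboration of the same reasoning.
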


\begin{lemma}\label{lem:seq-s}
The sequence $\{s_0,s_1,\ldots,s_p\}$ is strictly decreasing.
\end{lemma}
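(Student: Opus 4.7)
My goal is to combine the non-identity result of Lemma~\ref{lem:seq-s-non-identical} with a monotonicity argument to upgrade ``not equal'' into ``strictly decreasing.'' Concretely, I plan to show by induction on $i$ that $s_{i+1} \le s_i$ for every $0 \le i \le p-1$, and then invoke Lemma~\ref{lem:seq-s-non-identical} to rule out equality, giving $s_{i+1} < s_i$.

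\textbf{Key ingredients.} The induction rests on two elementary monotonicity facts that I will state and justify at the outset:
\begin{enumerate}[leftmargin=*]
    \item The function $\texttt{get-k}(x) = \lfloor (1-\gamma)(x-1)\rfloor + 1$ is non-decreasing in $x$, since $1-\gamma \ge 0$.
    \item The function $\texttt{solve-plex}(y)$ is non-decreasing in $y$: if $y \le y'$, then every $y$-plex is also a $y'$-plex (the defining constraint $d_g(v) \ge |V(g)| - y$ only becomes weaker as $y$ grows), so the maximum $y$-plex size is at most the maximum $y'$-plex size.
\end{enumerate}

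\textbf{Carrying out the induction.} For the base case, $s_1 = \texttt{solve-plex}(k_1)$ is the size of a subgraph of $G$, hence $s_1 \le |V| = s_0$. For the inductive step, assume $s_i \le s_{i-1}$ for some $1 \le i \le p-1$. Applying fact (i) to $k_{i+1} = \texttt{get-k}(s_i)$ and $k_i = \texttt{get-k}(s_{i-1})$ yields $k_{i+1} \le k_i$. Applying fact (ii) then gives
\[
s_{i+1} \;=\; \texttt{solve-plex}(k_{i+1}) \;\le\; \texttt{solve-plex}(k_i) \;=\; s_i,
\]
completing the induction. Combining $s_{i+1} \le s_i$ with $s_{i+1} \ne s_i$ from Lemma~\ref{lem:seq-s-non-identical} yields $s_{i+1} < s_i$ for every $0 \le i \le p-1$, which is exactly the strictly decreasing property claimed.

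\textbf{Anticipated difficulty.} The argument is essentially a clean composition of two monotonicity statements with the previous lemma, so no deep obstacle should arise. The one place where I would be careful is fact~(ii): I want to state the monotonicity of $\texttt{solve-plex}$ cleanly (via the containment of the $k$-plex families) rather than glossing over it, since the iteration's correctness hinges on this direction of comparison. Beyond that, I only need to make sure the induction range lines up correctly with the termination index, so that Lemma~\ref{lem:seq-s-non-identical} is applied only on iterations strictly before termination.
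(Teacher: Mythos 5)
Your proof is correct and follows essentially the same route as the paper: induction using the monotonicity of \texttt{get-k} and \texttt{solve-plex}, with Lemma~\ref{lem:seq-s-non-identical} upgrading the non-strict inequality to a strict one. The only cosmetic difference is the base case, where you use the trivial bound $s_1 \le |V| = s_0$ together with Lemma~\ref{lem:seq-s-non-identical}, whereas the paper obtains $s_1 < s_0$ directly from the assumption (via Lemma~\ref{lem:input-qc}) that $G$ is not itself a $\gamma$-quasi-clique; both are valid.
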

\begin{proof}
    We prove this lemma by the mathematical induction.\\
    \textcircled{\scriptsize{1}}\normalsize\enspace $p=1$. As we consider the case where $G$ itself is not a $\gamma$-quasi-clique (by Lemma~\ref{lem:input-qc}), we have $s_p=s_1=\texttt{solve-plex}(k_1) < s_0$. The base case holds true. \\
    \textcircled{\scriptsize{2}}\normalsize\enspace $p \geq 2$. Assume that the induction holds for $p-1$, i.e., we have $s_{p-1} < s_{p-2}$.
    We observe that \texttt{get-k} is a non-decreasing function, which implies that $\texttt{get-k}(s_{p-1}) \leq  \texttt{get-k}(s_{p-2})$. 
    Moreover, by the definition of the \texttt{solve-plex} function, for any $y_1 > y_2$, the inequality $\texttt{solve-plex}(y_1) \geq \texttt{solve-plex}(y_2)$ holds. Then we have
\begin{align*}
s_p &= \texttt{solve-plex}(k_p) = \texttt{solve-plex}(\texttt{get-k}(s_{p-1}))\\
&\leq \texttt{solve-plex}(\texttt{get-k}(s_{p-2}))=\texttt{solve-plex}(k_{p-1})=s_{p-1}.
\end{align*}
Note that based on Lemma~\ref{lem:seq-s-non-identical} the sequence $\{s_0,s_1,\ldots,s_p\}$ ensures that no two adjacent elements share the same value, which implies $s_p < s_{p-1}$, completing the inductive step.

Based on \textcircled{\scriptsize{1}}, \textcircled{\scriptsize{2}}, and the principle of mathematical induction, we complete the proof of Lemma~\ref{lem:seq-s}.
\end{proof}

Based on Lemma~\ref{lem:seq-s}, we can obtain the following corollary.

\begin{corollary}\label{cor:sequence}
    The sequence $\{s_0,s_1,\ldots,s_p\}$ is finite.
\end{corollary}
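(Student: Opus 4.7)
The plan is to derive finiteness as an immediate consequence of Lemma~\ref{lem:seq-s} combined with elementary integer bounds on the terms $s_i$. First I would observe that every term is a non-negative integer: $s_0 = |V|$ by definition, and for $i \geq 1$, $s_i = \texttt{solve-plex}(k_i)$ is the vertex count of an induced subgraph of $G$, hence a non-negative integer. Second, each $s_i$ lies in the bounded range $\{0, 1, \ldots, |V|\}$, since no induced subgraph of $G$ can have more than $|V|$ vertices.

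The third and final step is to invoke Lemma~\ref{lem:seq-s}, which asserts that the sequence is strictly decreasing. A strictly decreasing sequence of integers drawn from a finite set of cardinality $|V|+1$ can itself contain at most $|V|+1$ distinct elements, so $p \leq |V|$ and the sequence $\{s_0, s_1, \ldots, s_p\}$ is finite.

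Because the argument reduces to a one-line pigeonhole-style observation layered on top of Lemma~\ref{lem:seq-s}, I do not anticipate any genuine obstacle. The only subtlety worth flagging is that the $s_i$ must be interpreted as integer vertex counts rather than arbitrary real quantities, which is immediate from the definition of \texttt{solve-plex}; once this is noted, the corollary follows without further work.
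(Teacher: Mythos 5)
Your proof is correct and follows essentially the same route as the paper: invoke Lemma~\ref{lem:seq-s} for strict decrease, then use integrality and boundedness of the $s_i$ to conclude $p \leq |V|$ (the paper bounds $s_p \geq 1$ via the single-vertex $k$-plex, while you use $s_i \geq 0$, an immaterial difference). No gaps.
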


We then show the relationship between $\{s_0,s_1,\ldots,s_p,s_{p+1}\}$ and the size of the largest $\gamma$-quasi-clique.
\begin{lemma}\label{lem:seq-s-least}
$\forall s_i \in \{s_0,s_1,\ldots,s_{p+1}\}, \, s_i \geq s^*$ holds.
\end{lemma}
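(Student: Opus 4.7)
The plan is to prove the inequality by induction on $i$, with the base case being the trivial bound $s_0 = |V| \geq s^*$ and the inductive step relying on two monotonicity facts that were already established in the proof of Lemma~\ref{lem:seq-s} (namely that $\texttt{get-k}$ is non-decreasing in $x$ and $\texttt{solve-plex}$ is non-decreasing in $y$), together with one genuinely new structural observation: every $\gamma$-quasi-clique of size $n$ is a $\texttt{get-k}(n)$-plex. This last fact is the engine of the proof and is where I would spend most of the care.

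To establish the structural claim, I would take an arbitrary $\gamma$-QC $g$ of size $n = |V(g)|$ and argue that for every $v \in V(g)$,
\[
|V(g)| - d_g(v) \;\leq\; n - \lceil \gamma(n-1)\rceil \;=\; \lfloor (1-\gamma)(n-1)\rfloor + 1 \;=\; \texttt{get-k}(n).
\]
The first inequality uses integrality of $d_g(v)$ applied to the $\gamma$-QC inequality $d_g(v) \geq \gamma(n-1)$. The identity in the middle uses the floor/ceiling complementarity $\lceil \gamma(n-1)\rceil + \lfloor (1-\gamma)(n-1)\rfloor = n-1$, which follows from $\lfloor m - x\rfloor = m - \lceil x\rceil$ for integer $m$ applied with $m = n-1$ and $x = \gamma(n-1)$. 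This is the one place where a small calculation is unavoidable, and I would state it as a separate short claim so that it can be reused cleanly. Applied to the optimal solution $g^*$, it yields the key consequence $\texttt{solve-plex}(\texttt{get-k}(s^*)) \geq s^*$.

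With that in hand, the induction goes through smoothly. For the base case $i = 0$, the bound $s_0 = |V| \geq s^*$ is immediate. For the inductive step, assuming $s_{i-1} \geq s^*$ with $1 \leq i \leq p+1$, I would chain
\[
s_i \;=\; \texttt{solve-plex}(\texttt{get-k}(s_{i-1})) \;\geq\; \texttt{solve-plex}(\texttt{get-k}(s^*)) \;\geq\; s^*,
\]
where the first inequality combines monotonicity of $\texttt{get-k}$ (to get $\texttt{get-k}(s_{i-1}) \geq \texttt{get-k}(s^*)$) with monotonicity of $\texttt{solve-plex}$, and the second is precisely the consequence of the structural claim above. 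This covers every index in $\{s_0,\ldots,s_{p+1}\}$ and completes the proof.

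The only real obstacle is the floor/ceiling identity underlying the structural claim; everything else is bookkeeping built on lemmas already in place. I would therefore present the structural claim first as a self-contained sub-lemma and then give the three-line induction, to keep the logical skeleton transparent.
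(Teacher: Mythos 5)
Your proposal is correct, and it certifies the crucial inequality $\texttt{solve-plex}(\texttt{get-k}(s^*)) \geq s^*$ by a different route than the paper. The paper's proof introduces the realized density $\gamma^* = d_{g^*}(u)/(s^*-1)$ of a minimum-degree vertex $u$ of $g^*$, observes $\gamma^* \geq \gamma$, and then runs a two-step monotonicity chain on $\texttt{solve-plex}\bigl(1+\lfloor(1-\cdot)(\cdot-1)\rfloor\bigr)$ — first replacing $\gamma$ by $\gamma^*$ at size $s_{i-1}$, then shrinking the size from $s_{i-1}$ to $s^*$, at which point $g^*$ itself witnesses the bound. You instead prove the self-contained structural claim that any $\gamma$-QC on $n$ vertices is a $\texttt{get-k}(n)$-plex, via the complementarity $n-\lceil\gamma(n-1)\rceil = \lfloor(1-\gamma)(n-1)\rfloor+1$, and then need only the single monotonicity step $\texttt{get-k}(s_{i-1}) \geq \texttt{get-k}(s^*)$. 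Your version is arguably cleaner: it avoids introducing $\gamma^*$, it reuses exactly the integrality argument the paper already employs in the proof of Lemma~\ref{lem:input-qc} (so your sub-lemma subsumes that special case), and it sidesteps the paper's slightly over-strong claim that $s^* = \texttt{solve-plex}\bigl(1+\lfloor(1-\gamma^*)(s^*-1)\rfloor\bigr)$, where only ``$\geq$'' is actually justified (a larger $k$-plex at that $k$ need not be a $\gamma$-QC) and only ``$\geq$'' is needed. The paper's formulation with $\gamma^*$ buys nothing extra here beyond making explicit that the bound is witnessed by the optimum itself; both arguments cover all indices $0 \leq i \leq p+1$, since the recursion $s_i=\texttt{solve-plex}(\texttt{get-k}(s_{i-1}))$ holds up to and including $i=p+1$.
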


\begin{lemma}\label{lem:correct-ans}
    Algorithm~\ref{alg:basic-framework} correctly identifies the largest $\gamma$-quasi-clique, i.e., $s_{p+1} = s^*$.
\end{lemma}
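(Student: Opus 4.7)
The plan is to sandwich $s_{p+1}$ between $s^*$ (from above) and $s^*$ (from below). The upper bound $s_{p+1} \geq s^*$ is already handed to us by Lemma~\ref{lem:seq-s-least}, so the real work is to prove $s_{p+1} \leq s^*$. To do this, I would take $g$ to be the maximum $k_{p+1}$-plex in $G$ returned by $\texttt{solve-plex}(k_{p+1})$, so $|V(g)| = s_{p+1}$, and argue that $g$ is in fact a $\gamma$-quasi-clique; since $s^*$ is the maximum size of a $\gamma$-QC in $G$, this immediately gives $s_{p+1} \leq s^*$.

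The heart of the argument is a short arithmetic check that uses the terminating condition $k_{p+1} = \texttt{get-k}(s_{p+1})$. For any $v \in V(g)$, the $k$-plex definition gives
\[
d_g(v) \;\geq\; |V(g)| - k_{p+1} \;=\; s_{p+1} - \bigl(\lfloor (1-\gamma)(s_{p+1}-1)\rfloor + 1\bigr).
\]
Using $\lfloor x \rfloor \leq x$, this is at least
\[
s_{p+1} - (1-\gamma)(s_{p+1}-1) - 1 \;=\; \gamma \cdot (s_{p+1}-1),
\]
which is exactly the $\gamma$-QC degree condition. Hence $g$ is a $\gamma$-QC of size $s_{p+1}$, proving $s_{p+1} \leq s^*$.

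Combining with Lemma~\ref{lem:seq-s-least} yields $s_{p+1} = s^*$, which is the claim. I would also remark that Corollary~\ref{cor:sequence} and Lemma~\ref{lem:input-qc} together ensure the algorithm actually reaches the terminating iteration $p+1$ in finite time, so the statement is well-posed. The main (and only) technical obstacle is the floor manipulation above; everything else is a direct invocation of previously established lemmas.
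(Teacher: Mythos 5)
Your proposal is correct and follows essentially the same route as the paper: both rest on the observation that, under the termination condition $k_{p+1} = \texttt{get-k}(s_{p+1})$, the maximum $k_{p+1}$-plex of size $s_{p+1}$ is itself a $\gamma$-quasi-clique, combined with Lemma~\ref{lem:seq-s-least} for the reverse inequality. The only differences are cosmetic: the paper phrases the step as a proof by contradiction and calls the quasi-clique property of the returned $k_{p+1}$-plex ``clear,'' whereas you argue directly and spell out the floor-function arithmetic explicitly.
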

\begin{proof}
    We prove by contradiction. By Lemma~\ref{lem:seq-s-least}, we assume, to the contrary, that $s_{p+1} > s^*$. According to the termination condition in Line 4 of Algorithm~\ref{alg:basic-framework}, we have $k_{p+1} = \texttt{get-k}(s_{p+1}) = 1 + \lfloor (1-\gamma) \cdot (s_{p+1} - 1)\rfloor$. Thus, it is clear that the result obtained by $\texttt{solve-plex}(k_{p+1})$ is a $\gamma$-quasi-clique of size $s_{p+1}$, which contradicts $s^*$ being the size of the maximum $\gamma$-quasi-clique.
\end{proof}

\noindent \underline{\textbf{Discussions.}}
We first note that an example of Algorithm~\ref{alg:basic-framework} and its iterative process are provided in Appendix~\ref{sec:example}.
As shown in Lemma~\ref{lem:correct-ans}, Algorithm~\ref{alg:basic-framework} can correctly solve \texttt{MaxQC}. The time complexity analysis is simple, since there are at most $n$ iterations (Lines 2-6) and each iteration invokes a solver for the maximum $k$-plex problem. Note that the current best time complexities of the maximum $k$-plex algorithms are $O^*((k+1)^{\delta + k - s^*} )$ and $O^*(\gamma_k ^ \delta)$~\cite{wang2023kplex,chang2024maximum,gao2024maximum}, where ${O^*}$ suppresses the polynomial factors, $\gamma_k < 2$ is the largest real root of $x^{k+3} -2x^{k+2} + x^2 - x + 1 = 0$, and $\delta$ is the degeneracy of the graph. 
We also remark that prior studies adopted similar approaches by reducing a more difficult problem to solving another problem iteratively. For instance, Chang and Yao~\cite{chang2024maximum} briefly discuss the relationship between the $\gamma$-quasi-clique and the $k$-plex. They focus on enumerating maximal $\gamma$-quasi-cliques and mention a method for generating an initial solution set by enumerating maximal $k$-plexes, followed by a screening step to remove those that do not satisfy the maximality condition. However, this method is mainly of theoretical interest and runs slowly in practice, since it requires enumerating a large number of maximal $k$-plexes. Zhang and Liu~\cite{zhang2023mincore} tackle the minimum $k$-core problem by reducing it to multiple iterations of solving the maximum $k$-plex problem. Their iterative strategy incrementally adjusts the value of $k$ by one in each round, progressively approaching the desired solution. Similarly, they also adopt an iterative strategy that considers different values of $k$, where the value of $k$ differs by one between consecutive iterations.
In contrast, we consider the maximum $\gamma$-quasi-clique problem. Our method adaptively adjusts the values of $k$ based on the size of the current best $k$-plex, rather than exhaustively enumerating all possible values of $k$.

However, Algorithm~\ref{alg:basic-framework} still suffers from efficiency issues for the following reasons.
{\bf First}, Algorithm~\ref{alg:basic-framework} utilizes a solver for \texttt{solve-plex} as a black box. This solver relies on a conservative lower bound within \texttt{solve-plex} to reduce the graph, leading to inefficiencies when handling certain challenging instances.
{\bf Second}, the graph processed iteratively in \texttt{solve-plex} (Line 3) often contains many unpromising vertices, which negatively affects overall performance, even after graph reduction within \texttt{solve-plex}. Further, in the initial iteration, the first value of $k$ is set as $k_1=\texttt{get-k}(|V|)$ with $|V|$ is a trivial upper bound of $s^*$ in Line 1. This initialization could potentially result in numerous unnecessary iterations of Lines 2-6.

\section{Our Improved Algorithm: \texttt{IterQC}}\label{sec:improved}
To address the limitations of our basic iterative framework (Algorithm~\ref{alg:basic-framework}), we propose an advanced iterative method \texttt{IterQC} (shown in Algorithm~\ref{alg:improved-framwork}), which incorporates two key stages: the preprocessing stage and the iterative search stage. 
\textbf{First}, we consider the iterative search stage (Line 4). Within the function \texttt{Plex-Search} -- an improved version of \texttt{solve-plex} from Algorithm~\ref{alg:basic-framework} -- we introduce a \emph{pseudo lower bound (pseudo LB)}. This technique improves practical performance when handling challenging instances while preserving the correctness of the iterative algorithm for \texttt{MaxQC}. The pseudo LB technique is discussed in Section~\ref{subsec-pseudolb}.
\textbf{Second}, we provide a \emph{preprocessing} technique (Lines 1-3) to improve efficiency. Specifically, we compute both the lower and upper bounds, \( lb \) and \( ub \), of the size \( s^* \) of the largest \( \gamma \)-quasi-clique \( g^* \) via \texttt{Get-Bounds} in Line 1. Using \(lb\) and \(ub\), in Lines 2-3, we can (1) reduce the graph by removing unpromising vertices/edges from $G$ that cannot appear in $g^*$, and (2) initialize a smaller value of $k$ for \texttt{Plex-Search}, potentially reducing the number of iterations. The preprocessing technique is introduced in Section~\ref{subsec-prepro}.
Finally, we analyze the time complexity of our \texttt{IterQC}, which theoretically improves upon the state-of-the-art algorithms, in Section~\ref{sec:improved-time}

\begin{algorithm}[t]
    \caption{Our Improved Algorithm: \texttt{IterQC}}\label{alg:improved-framwork}
    \KwIn{A graph $G=(V,E)$ and a real value $0.5 \leq \gamma \leq 1$}
    \KwOut{The size of the maximum $\gamma$-quasi-clique in $G$} 
    \tcp{Stage 1: Preprocessing}
    $lb, ub \gets \texttt{Get-Bounds}(G, \gamma)$ \;

    \While{$\min_{v \in V} d_G(v) < \lfloor(lb - 1) \cdot \gamma \rfloor$}{
        $u \gets \arg\min_{v \in V} d_G(v)$; Remove $u$ from $G$\;
    }    
    \tcp{Stage 2: Iterative Search}
    \texttt{Improved-Iter-Search$(G, \gamma, ub)$}\;
\end{algorithm}

\subsection{A Novel Pseudo Lower Bound Technique}\label{subsec-pseudolb}
The basic iterative framework (Algorithm~\ref{alg:basic-framework}) solves \texttt{MaxQC} by iteratively invoking the maximum $k$-plex solver \texttt{solve-plex} as a black box with varying values of $k$. To improve practical efficiency, we have the following key observation.

\noindent \underline{\textbf{Key observation.}} Existing studies~\cite{Xiao2017maxkplex,chang2022efficient,jiang2023refined,wang2023kplex,chang2024maximum,gao2024maximum} on maximum $k$-plex solvers mainly consist of two steps: (1) heuristically computing a lower bound of the maximum $k$-plex, and (2) exhaustively conducting a branch-and-bound search to find the final result.
It is well known that Step (1) has a polynomial complexity while Step (2) requires an exponential complexity. Thus, both theoretically and practically, in most cases, the time cost of the branch-and-bound search dominates that of obtaining the lower bound. 

With the above property, our key idea is to \emph{balance} the time cost of both steps. In particular, we introduce a \emph{pseudo lower bound}, denoted by $pseudo\mbox{-}lb$, which is defined as the average of the heuristic lower bound from Step (1) and a known upper bound for the maximum $k$-plex solution. In other words, $pseudo\mbox{-}lb$ is always at least the heuristic lower bound and at most the known upper bound.
By incorporating $pseudo\mbox{-}lb$ into the branch-and-bound search in Step (2), we may improve pruning effectiveness, which reduces the overall search time as verified by our experiments. We note that in maximum $k$-plex algorithms, Step (1) corresponds to \texttt{Plex-Heu} for computing a heuristic solution, while Step (2) corresponds to \texttt{Plex-BRB} for performing the branch-and-bound search.


\noindent \underline{\textbf{The technique of pseudo LB.}} We call this technique as \emph{pseudo lower bound (pseudo LB)}, which is incorporated into a branch-and-bound search algorithm called \texttt{Plex-Search} in Algorithm~\ref{alg:kplex-search}.
In \texttt{Plex-Search}, $ub\mbox{-}plex$ represents an upper bound of the size of the maximum $k$-plex found in \texttt{Plex-Search}, meaning no $k$-plex larger than $ub\mbox{-}plex$ exists. The output of \texttt{Plex-Search} include a pseudo lower bound $pseudo\mbox{-}lb$ and a pseudo size $|S|$ which corresponds to (1) the size of the maximum $k$-plex of size at least $pseudo\mbox{-}lb$ if such a $k$-plex exists, or (2) 0, if no $k$-plex of size at least $pseudo\mbox{-}lb$ exists. This is because, when the input $pseudo\mbox{-}lb$ exceeds the size of the maximum $k$-plex in the graph, \texttt{Plex-BRB} will utilize this $pseudo\mbox{-}lb$ for pruning, resulting in an empty graph as the vertex set $S$.


\noindent \underline{\textbf{Our \texttt{Plex-Search} method.}} We describe \texttt{Plex-Search} in Algorithm~\ref{alg:kplex-search}.
Specifically, in Line 1, we invoke \texttt{Plex-Heu} to compute a lower bound $lb\mbox{-}plex$. If $lb\mbox{-}plex = ub\mbox{-}plex$, we can directly return $lb\mbox{-}plex$ as both the pseudo lower bound and the pseudo size in Line 2, since \texttt{Plex-Heu} already finds the maximum $k$-plex in this case. Then, Line 3 computes our pseudo lower bound $pseudo\mbox{-}lb$ as the average of $lb\mbox{-}plex$ and $ub\mbox{-}plex$. We conduct \texttt{Plex-BRB} using $pseudo\mbox{-}lb$ for pruning and obtain the corresponding $k$-plex $S$ (either the vertex set of maximum $k$-plex or an empty set). Finally, we return $pseudo\mbox{-}lb$ and $|S|$ in Line 5.

\begin{algorithm}[t]
    \caption{\texttt{Plex-Search$(G,k,ub\mbox{-}plex)$}}\label{alg:kplex-search}
    \KwOut{a pseudo lower bound $pseudo\mbox{-}lb$ and a pseudo size}
    $lb\mbox{-}plex \gets \texttt{Plex-Heu}(G)$; \tcp{$G$ is locally reduced.}
    \lIf{$lb\mbox{-}plex = ub\mbox{-}plex$}{
        \Return{$lb\mbox{-}plex, lb\mbox{-}plex$}}
    $pseudo\mbox{-}lb \gets \lfloor(lb\mbox{-}plex + ub\mbox{-}plex)/2\rfloor$\;
    $S \gets \texttt{Plex-BRB}(G, pseudo\mbox{-}lb)$; \tcp{The maximum $k$-plex of size at least $pseudo\mbox{-}lb$ if exists; empty, otherwise.}
    \Return{$pseudo\mbox{-}lb, |S|$}\;
\end{algorithm}


\noindent \underline{\textbf{Our improved iterative search algorithm.}} With our newly proposed \texttt{Plex-Search} with the pseudo LB technique, we next present our improved iterative search method \texttt{Improved-Iter-Search} in Algorithm~\ref{alg:improved-iterative-search}. This method is similar to our basic iterative framework in Algorithm~\ref{alg:basic-framework}.
Specifically, Algorithm~\ref{alg:improved-iterative-search} initializes $s_0$ to $ub$, computes the corresponding value of $k_1$ using the function \texttt{get-k}, and sets the index $i$ to 1 (Line 1). The algorithm then iteratively computes $s_i$ by invoking \texttt{Plex-Search} (instead of \texttt{solve-plex} in Algorithm~\ref{alg:basic-framework}), while updating $k_i$ using the function \texttt{get-k} at each iteration (Lines 2-6). Note that both $pseudo\mbox{-}lb$ and $pseudo\mbox{-}size$ are obtained from our \texttt{Plex-Search}, and $s_i$ is set to the greater of these two values in Line 3.
The iteration terminates when $k_i = \texttt{get-k}(s_i)$ and $pseudo\mbox{-}size \geq pseudo\mbox{-}lb$ in Line 4. This termination condition ensures that (1) the output from \texttt{Plex-Search} is the maximum $k$-plex even with the use of a pseudo lower bound; and (2) the current iteration produces the same value of $k$ for the next iteration from \texttt{Plex-Search}. At this point, we return $s^*$ as the size of the maximum $\gamma$-quasi-clique in Line 5. 

\begin{algorithm}[t]
    \caption{\texttt{Improved-Iter-Search$(G, \gamma, ub)$}}
    \label{alg:improved-iterative-search}
    \KwOut{The size of the maximum $\gamma$-quasi-clique in $G$}  
    $s_0 \gets ub$; $k_1 \gets \texttt{get-k}(s_0)$; $i \gets 1$\;
    \While{true}{
        $pseudo\mbox{-}lb, pseudo\mbox{-}size \gets$ \texttt{Plex-Search}($G$, $k_i$, $s_{i-1}$); $s_i \gets \max \{ pseudo\mbox{-}lb, pseudo\mbox{-}size\}$\;
        \If{$k_i = \emph{\texttt{get-k}}(s_i)$ \textbf{\em and} $pseudo\mbox{-}size \geq pseudo\mbox{-}lb$}{
            $s^*\gets s_i$; \Return{$s^*$}\;
        }
        $k_{i+1} \gets \texttt{get-k}(s_i)$; $i \gets i + 1$\;
    }
\end{algorithm}

\noindent \underline{\textbf{Remark.}}
We note that our improved iterative search method in Algorithm~\ref{alg:improved-iterative-search} can use adaptions of any existing maximum $k$-plex algorithm~\cite{Xiao2017maxkplex,chang2022efficient,jiang2023refined,wang2023kplex,chang2024maximum,gao2024maximum} for \texttt{Plex-Search}, provided the algorithm can be decomposed into two steps: (1) heuristically obtaining a lower bound for the maximum $k$-plex, \texttt{Plex-Heu}, and (2) exhaustively conducting the exact branch-and-bound search for the final solution, \texttt{Plex-BRB}. In this work, we utilize one of the state-of-the-art maximum $k$-plex algorithms \texttt{KPEX}~\cite{gao2024maximum}. 

\noindent \underline{\textbf{Correctness.}} We prove the correctness of our improved iterative search algorithm \texttt{Improved-Iterative-Search} in Algorithm~\ref{alg:improved-iterative-search}, which includes the pseudo LB technique in Algorithm~\ref{alg:kplex-search}. 
The complete proof and the illustration example of the pseudo LB technique are provided in Appendix~\ref{sec:proof-improved} and Appendix~\ref{sec:example}, respectively.

\subsection{Our Preprocessing Technique}\label{subsec-prepro}
To show our preprocessing technique, we first describe the computation of lower and upper bounds \texttt{Get-Bounds} (Line 1 of Algorithm~\ref{alg:improved-framwork}).

\noindent \underline{\textbf{Computation of lower and upper bounds.}}  
Following the idea commonly adopted for computing lower and upper bounds in prior dense subgraph search studies~\cite{yu2023fast,chang2022efficient,Chang2023,rahman2024pseudo}, we derive our bounds based on the $k$-core structure. In the following, we introduce the relevant concepts and present the bounds for our problem.
\begin{definition} (\(k\)-core~\cite{Seidman1983})
    Given a graph \(G\) and an integer \(k\), the \(k\)-core of \(G\) is the maximal subgraph \(g\) of \(G\) such that every vertex \(u \in V(g)\) has degree \(d_g(u) \geq k\) in \(g\).
\end{definition}
Based on the $k$-core, a related concept is the \emph{core number} of a vertex \( u \), denoted as \({core}(u)\), which is the largest \( k \) such that \( u \) is part of a \( k \)-core in the graph. In other words, \( u \) cannot belong to any subgraph where all vertices have a degree of at least \(core(u) + 1 \). 
Let $max\_core$ be the maximum core number in the graph, i.e., $max\_core = \max_{u \in V} core(u)$. Given a graph $G=(V,E)$, the core numbers for all vertices can be computed in \( O(|V| + |E|) \) time using the famous peeling algorithm~\cite{Matula83,BVZ03m}, which iteratively removes vertices with the smallest degree from the current graph. 
We remark that during the peeling algorithm used to compute the core numbers, we can obtain the lower bound $lb$ of $s^*$ as a by-product. In particular, we check whether the current graph qualifies as a $\gamma$-quasi clique and record the size of the largest qualifying $\gamma$-quasi clique. Moreover, as the current graph keeps shrinking during the peeling algorithm, the lower bound $lb$ clearly corresponds to the size of the first current graph that meets the $\gamma$-quasi clique requirements.

\begin{lemma}\label{lem:ub}
    Given a graph $G$ and a real number $0 < \gamma \leq 1$, we have $s^* \leq 1 + \lceil max\_core / \gamma \rceil$.
\end{lemma}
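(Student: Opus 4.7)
The plan is to relate the minimum degree inside $g^*$ to the core number of $G$, and then invert the resulting inequality to bound $s^*$. I would start from the definition of a $\gamma$-quasi-clique applied to the optimum: every vertex $v \in V(g^*)$ satisfies $d_{g^*}(v) \geq \gamma(s^*-1)$. In particular, the minimum degree of the induced subgraph $G[V(g^*)]$ is at least $\gamma(s^*-1)$, so $g^*$ is itself a witness subgraph in which every vertex has degree at least $\lceil \gamma(s^*-1)\rceil$.

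Next I would invoke the definition of the core number. Because $g^*$ is a subgraph of $G$ in which every vertex has degree at least $\lceil \gamma(s^*-1)\rceil$, every vertex $u \in V(g^*)$ belongs to a subgraph of $G$ (namely $g^*$ itself) with minimum degree at least $\lceil \gamma(s^*-1)\rceil$. By the definition of $core(u)$ as the largest $k$ for which $u$ lies in some $k$-core, this forces $core(u) \geq \lceil \gamma(s^*-1)\rceil$ for every $u \in V(g^*)$. Consequently $max\_core \geq \gamma(s^*-1)$.

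From $max\_core \geq \gamma(s^*-1)$, dividing by $\gamma > 0$ gives $s^* - 1 \leq max\_core/\gamma$, and since $s^*$ is an integer I can tighten this to $s^*-1 \leq \lceil max\_core/\gamma \rceil$, yielding $s^* \leq 1 + \lceil max\_core/\gamma \rceil$ as required.

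The proof is essentially a one-line chain of implications, so I do not expect a substantive obstacle. The only subtlety is to justify cleanly the step that the minimum-degree condition inside $g^*$ forces a lower bound on the core number; this uses that $g^*$ itself is already the $k$-core-certifying subgraph for its own vertices, with $k = \lceil \gamma(s^*-1)\rceil$. The other minor care point is the floor/ceiling bookkeeping when passing from the real inequality $s^*-1 \leq max\_core/\gamma$ to the integer ceiling form, which is straightforward.
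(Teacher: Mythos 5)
Your proposal is correct and follows essentially the same route as the paper, which notes that the minimum degree of the optimal $\gamma$-quasi-clique is at least $\gamma(s^*-1)$ by definition and at most $max\_core$ (your core-number argument via $g^*$ being its own certifying subgraph is just the standard justification of that latter fact), and then rearranges. The floor/ceiling bookkeeping is even simpler than you suggest, since $s^*-1 \leq max\_core/\gamma \leq \lceil max\_core/\gamma\rceil$ needs no integrality of $s^*$.
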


The proof of Lemma~\ref{lem:ub} directly follows from the fact that the minimum degree of a $\gamma$-QC is at most $max\_core$ and the definition of $\gamma$-QC.
\begin{algorithm}[t]
    \caption{\texttt{Get-Bounds$(G,\gamma)$}}\label{alg:get-quasi-clique-bound}
    \KwOut{The lower and upper bounds $lb$ and $ub$ of $s^*$}
    
    $lb \gets 0$; $ub \gets 0$; $max\_core \gets 0$\;

    \While{$V(G) \neq \emptyset$}{
        $u \gets \arg\min_{v \in V(G)} d_G(v)$\;
        $max\_core \gets \max(max\_core, d_G(u))$\;
        $ub \gets \max (ub, \min(1 + \lceil max\_core/\gamma \rceil, |V(G)|))$\;
        \If{$G$ is a $\gamma$-quasi-clique \textbf{\em and} $lb = 0$}{
            $lb \gets |V(G)|$\;
        }
        Remove $u$ from $G$\;
    }
    \Return{$lb$, $ub$}\;
\end{algorithm}
Our method \texttt{Get-Bounds} for computing both lower and upper bounds is summarized in Algorithm~\ref{alg:get-quasi-clique-bound}. In particular, as mentioned, Algorithm~\ref{alg:get-quasi-clique-bound} follows the peeling algorithm, which iteratively removes the vertex with the smallest degree from the current graph while dynamically updating $ub$ and $lb$.

\noindent \underline{\textbf{The preprocessing stage.}}  
After obtaining both lower and upper bounds via \texttt{Get-Bounds}, we show how to utilize these bounds in our preprocessing. Recall that, with $lb$ and $ub$, we can (1) reduce the graph by removing unpromising vertices from $G$ that cannot appear in $g^*$ using $lb$, and (2) set a smaller initial value of $k$ in the function \texttt{Plex-Search} using $ub$. The details are as follows.
For (1), we can iteratively remove vertices from \( G \) with degrees less than \( \lfloor(lb - 1) \cdot \gamma \rfloor \) and their incident edges, since it is clear that no solution of size greater than \( lb \) can still include these vertices.
For (2), we can easily derive from Lemma~\ref{cor:iteration with ub} (Section~\ref{alg:improved-iterative-search} where the proofs are in Appendix~\ref{sec:proof-improved}) that, instead of initializing $s_0$ as $|V|$, $s_0$ can be set to an integer larger than $s^*$ while still ensuring our iterative framework produces the correct solution. This allows us to use the obtained upper bound \( ub \) to update \( s_0 \). Additionally, this $ub$ is used to generate the initial value of $k$ for \texttt{Plex-Search} through \texttt{get-k}, while maintaining the correctness of the iterative framework.

\subsection{Time Complexity Analysis}~\label{sec:improved-time}
As mentioned, in our implementation of \texttt{Improved-Iter-Search} (or specifically, \texttt{Plex-Solve} in Algorithm~\ref{alg:kplex-search}), we use two components, i.e., the heuristic and the branch-and-bound search, of the maximum $k$-plex solver \texttt{kPEX}~\cite{gao2024maximum}.
The time complexity of our exact algorithm \texttt{IterQC} is dominated by \texttt{Plex-BRB} in Algorithm~\ref{alg:kplex-search}, which is invoked at most $n$ times. The time complexities of \texttt{kPEX} are $O^*((k+1)^{\delta + k - s^*} )$ and $O^*(\gamma_k ^ \delta)$, where ${O^*}$ suppresses the polynomials, $\gamma_k < 2$ is the largest real root of $x^{k+3} -2x^{k+2} + x^2 - x + 1 = 0$, and $\delta$ is the degeneracy of the graph.
These complexities are derived using the branching methods in~\cite{wang2023kplex,chang2024maximum}, which represent the current best time complexities for the maximum $k$-plex problem.
In addition, as the sequence $\{s_1, s_2, \ldots, s_p\}$ generated by our improved iterative framework is monotonically decreasing and \texttt{get-k} is a non-decreasing function, each value of $k_i$ is thus upper bounded by $k_1$. Thus, the time complexity of each iteration depends on solving the maximum $k$-plex problem with $k_1$. Further, with the preprocessing technique, $k_1 = \delta \frac{1-\gamma}{\gamma}$ and we let $k = k_1$. Therefore, the time complexity of \texttt{IterQC} is given by 
$
O^*\left(\min \left\{ \left(\delta + 1 \right)^{2\delta - s^*}, \gamma_{k}^\delta \right\}\right).
$
This improves upon (1) the time complexity of \texttt{FastQC}~\cite{yu2023fast}, which is $O^*(\sigma_{k}^{\delta \cdot \Delta})$, where $\Delta$ denotes the maximum degree of vertices in the graph and $\gamma_k < \sigma_k$ for each $k$, and improves upon (2) the time complexity of \texttt{DDA}~\cite{Pastukhov2018gamma}, which is $O^*(2^n)$.

\section{Experiments}\label{sec:exp}
We conduct extensive experiments to evaluate the practical performance of our exact algorithm \texttt{IterQC} against other exact methods.

\begin{itemize}[leftmargin=*]
    \item {\texttt{DDA}\footnote{We re-implemented the algorithm based on the description in \cite{Pastukhov2018gamma} using the ILP solver IBM ILOG CPLEX Optimizer (version 22.1.0.0). Our implementation has better performances compared with the results shown in their paper.}:} the state-of-the-art algorithm~\cite{Pastukhov2018gamma}.  
    \item {\texttt{FastQC}\footnote{https://github.com/KaiqiangYu/SIGMOD24-MQCE}:} a baseline adapted from the state-of-the-art algorithm for enumerating all maximal $\gamma$-QCs of size at least a given lower bound~\cite{yu2023fast}. We adapt \texttt{FastQC} for our \texttt{MaxQC} problem by tracking of the largest maximal $\gamma$-QC seen so far and dynamically updating the lower bound to facilitate better pruning. Note that the largest $\gamma$-QC is the largest one among all maximal $\gamma$-QCs. 
\end{itemize} 

\noindent \underline{{\bf Setup.}}  All algorithms are implemented in C++, compiled with g++ \mbox{-O3}, and run on a machine with an Intel CPU @ 2.60GHz and 256GB main memory running Ubuntu 20.04.6. We set the time limit as 3 hours (i.e., 10,800s) and use \textbf{OOT} (Out Of Time limit) to represent the time exceeds the limit.
Our source code can be found at https://github.com/SmartProbiotics/IterQC.

\noindent \underline{{\bf Datasets.}}  We evaluate the algorithms on two graph collections (223 graphs in total) that are widely used in previous studies.
\begin{itemize}[leftmargin=*]
    \item {The \textbf{real-world} collection\footnote{http://lcs.ios.ac.cn/~caisw/Resource/realworld\%20graphs.tar.gz}} contains 139 real-world graphs from Network Repository with up to $5.87 \times 10^7$ vertices.  
    \item {The \textbf{10th DIMACS} collection\footnote{https://sites.cc.gatech.edu/dimacs10/downloads.shtml}} contains 84 graphs from the 10th DIMACS implementation challenge  with up to $5.09 \times {10^7}$ vertices.  
\end{itemize}

To provide a detailed comparison, we select 30 representative graphs as in Table~\ref{table:representive_instances} (where density is computed as $\frac{2m}{n(n-1)}$). 
Specifically, these representative graphs include 10 small graphs (i.e., G1 to G10) with $n <10^5$, 10 medium graphs (i.e., G11 to G20) with $10^5 \leq n < 10^7$, and 10 large graphs (i.e., G21 to G30) with $n \geq 10^7$.

\begin{table}[t]
    \centering
    \captionsetup{font=small}
    \caption{Statistics of 30 representative graphs.}
    \vspace{-0.15in}
    \label{table:representive_instances}
    \scalebox{0.75}{
    \begin{tabular}{c|l|c|c|c|c}
        \hline
        ID & Graph & $n$ & $m$ & density & $\delta$ \\ \hline
        G1 & tech-WHOIS & 7476 & 56943 & $2.04 \cdot 10^{-3}$ & 88 \\ \hline
        G2 & scc\_retweet & 17151 & 24015 & $1.63 \cdot 10^{-4}$ & 19 \\ \hline
        G3 & socfb-Berkeley & 22900 & 852419 & $3.25 \cdot 10^{-3}$ & 64\\ \hline
        G4 & ia-email-EU & 32430 & 54397 & $1.03 \cdot 10^{-4}$ & 22 \\ \hline
        G5 & socfb-Penn94 & 41536 & 1362220 & $1.58 \cdot 10^{-3}$ & 62\\ \hline
        G6 & sc-nasasrb & 54870 & 1311227 & $8.71 \cdot 10^{-4}$ &35\\ \hline
        G7 & socfb-OR & 63392 & 816886 & $4.07 \cdot 10^{-4}$ &52\\ \hline
        G8 & soc-Epinions1 & 75879 & 405740 & $1.41 \cdot 10^{-4}$ &67\\ \hline
        G9 & rec-amazon & 91813 & 125704 & $2.98 \cdot 10^{-5}$ &4 \\ \hline
        G10 & ia-wiki-Talk & 92117 & 360767 & $8.50 \cdot 10^{-5}$ &58\\ \hline
        G11 & soc-LiveMocha & 104103 & 2193083 & $4.05 \cdot 10^{-4}$ &92\\ \hline
        G12 & soc-gowalla & 196591 & 950327 & $4.92 \cdot 10^{-5}$ &51\\ \hline
        G13 & delaunay\_n18 & 262144 & 786395 & $2.29 \cdot 10^{-5}$ &4\\ \hline
        G14 & auto & 448695 & 3314610 & $3.29 \cdot 10^{-5}$ &9\\ \hline
        G15 & soc-digg & 770799 & 5907132 & $1.99 \cdot 10^{-5}$ &236\\ \hline
        G16 & ca-hollywood-2009 & 1069126 & 56306653 & $9.85 \cdot 10^{-5}$ &2208\\ \hline
        G17 & inf-belgium\_osm & 1441295 & 1549969 & $1.49 \cdot 10^{-6}$ &3\\ \hline
        G18 & soc-orkut & 2997166 & 106349209 & $2.37 \cdot 10^{-5}$ &230\\ \hline
        G19 & rgg\_n\_2\_22\_s0 & 4194301 & 30359197 & $3.45 \cdot 10^{-6}$ &19\\ \hline
        G20 & inf-great-britain\_osm & 7733821 & 8156516 & $2.73 \cdot 10^{-7}$ &3\\ \hline
        G21 & inf-asia\_osm & 11950756 & 12711602 & $1.78 \cdot 10^{-7}$ &3\\ \hline
        G22 & hugetrace-00010 & 12057441 & 18082178 & $2.49 \cdot 10^{-7}$ &2\\ \hline
        G23 & inf-road\_central & 14081816 & 16933412 & $1.71 \cdot 10^{-7}$ &3 \\ \hline
        G24 & hugetrace-00020 & 16002413 & 23998812 & $1.87 \cdot 10^{-7}$ &2\\ \hline
        G25 & rgg\_n\_2\_24\_s0 & 16777215 & 132557199 & $9.42 \cdot 10^{-7}$ &20\\ \hline
        G26 & delaunay\_n24 & 16777216 & 50331600 & $3.58 \cdot 10^{-7}$ &4\\ \hline
        G27 & hugebubbles-00020 & 21198119 & 31790178 & $1.41 \cdot 10^{-7}$ &2\\ \hline
        G28 & inf-road-usa & 23947347 & 28854312 & $1.01 \cdot 10^{-7}$ &3\\ \hline
        G29 & inf-europe\_osm & 50912018 & 54054659 & $4.17 \cdot 10^{-8}$ &3\\ \hline
        G30 & socfb-uci-uni & 58790782 & 92208195 & $5.34 \cdot 10^{-8}$ &16\\ \hline
    \end{tabular}}
\end{table}

\subsection{Comparison with Baselines}
\begin{figure}[t]
    \centering
    \subfigure[10th DIMACS (3-hour limit)]{
        \includegraphics[width=0.21\textwidth]{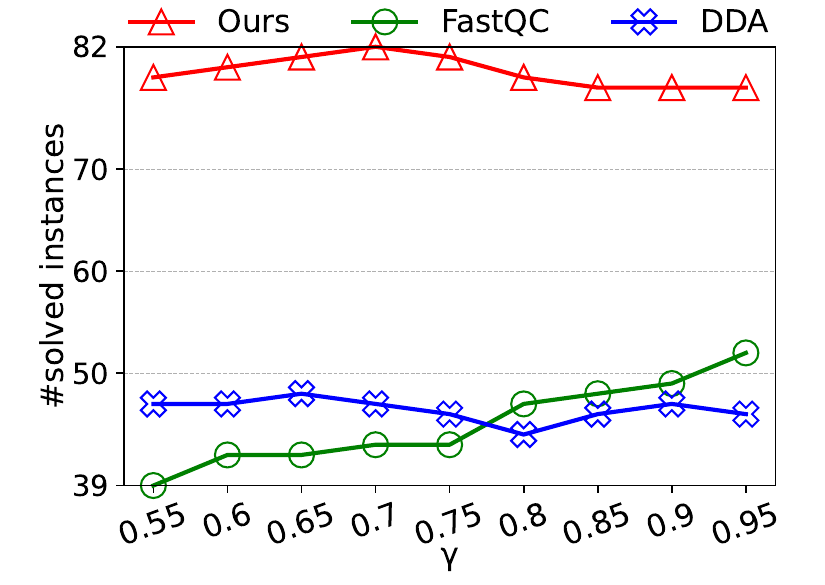}
        \label{fig:dimacs3h}
    }
    \hspace{0em}
    \subfigure[real-world (3-hour limit)]{
        \includegraphics[width=0.21\textwidth]{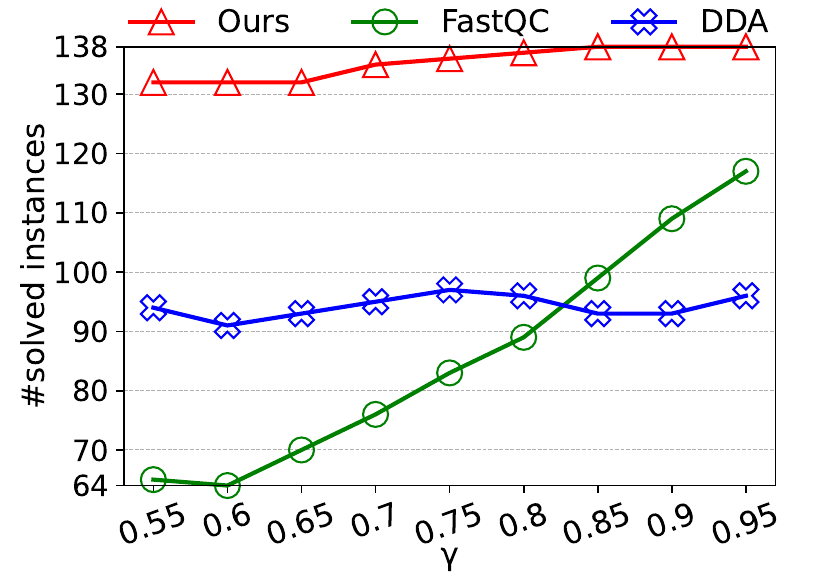}
        \label{fig:realworld3h}
    }
    
    \vspace{0em} 
    
    \subfigure[10th DIMACS (3-second limit)]{
        \includegraphics[width=0.21\textwidth]{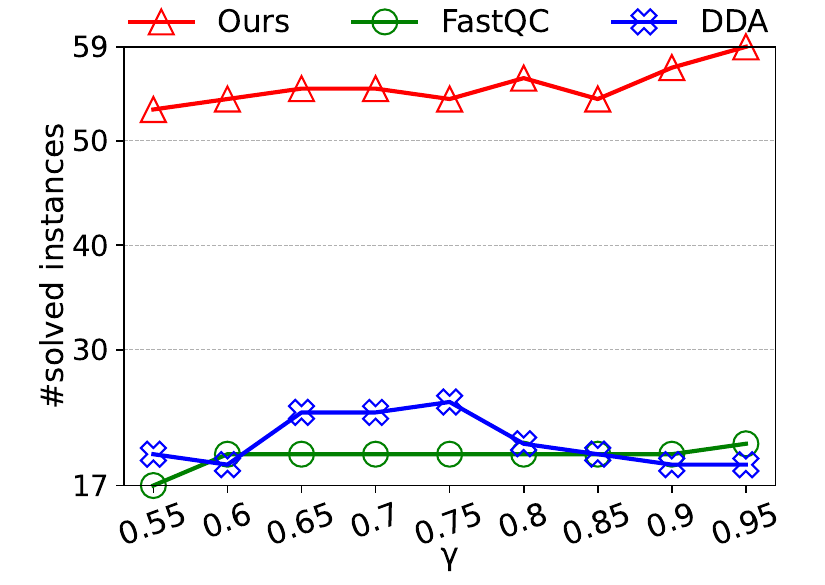}
        \label{fig:dimacs3s}
    }
    \hspace{0em}
    \subfigure[real-world (3-second limit)]{
        \includegraphics[width=0.21\textwidth]{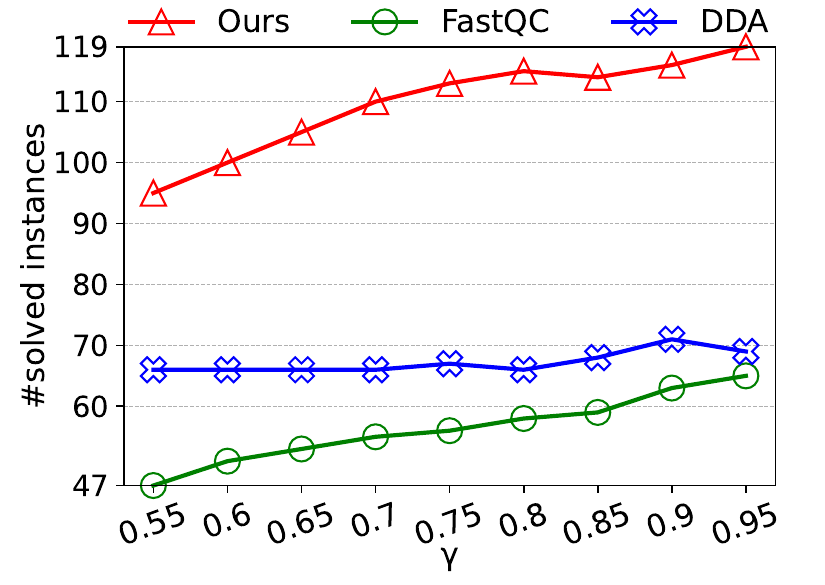}
        \label{fig:realworld3s}
    }
    \vspace{-0.2in}
    \caption{Number of solved instances with varying $\gamma$.}
    \label{Fig:3h}
\end{figure}

\begin{figure}[t]
    \centering
    \subfigure[$\gamma = 0.65$]{
        \includegraphics[width=0.21\textwidth]{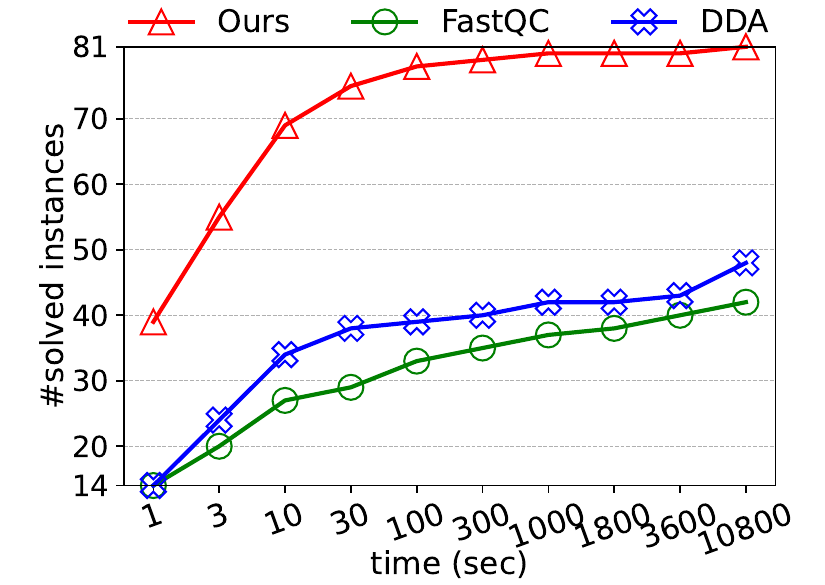}
        \label{fig:dimacs065}
    }
    \subfigure[$\gamma = 0.75$]{
        \includegraphics[width=0.21\textwidth]{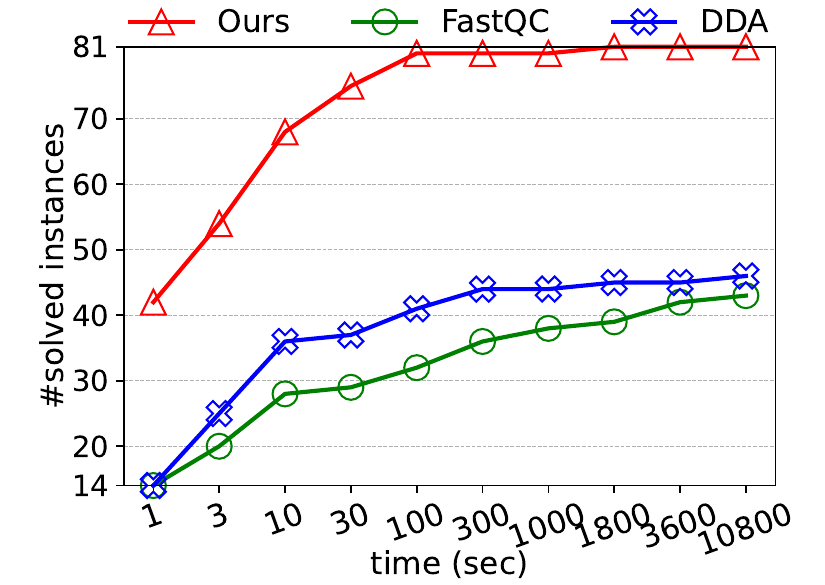}
        \label{fig:dimacs075}
    }
    \subfigure[$\gamma = 0.85$]{
        \includegraphics[width=0.21\textwidth]{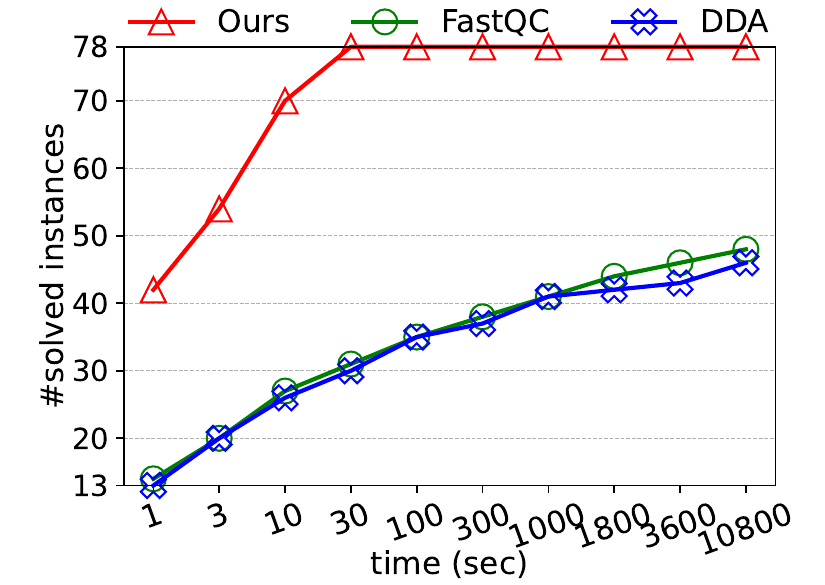}
        \label{fig:dimacs085}
    }
    \subfigure[$\gamma = 0.95$]{
        \includegraphics[width=0.21\textwidth]{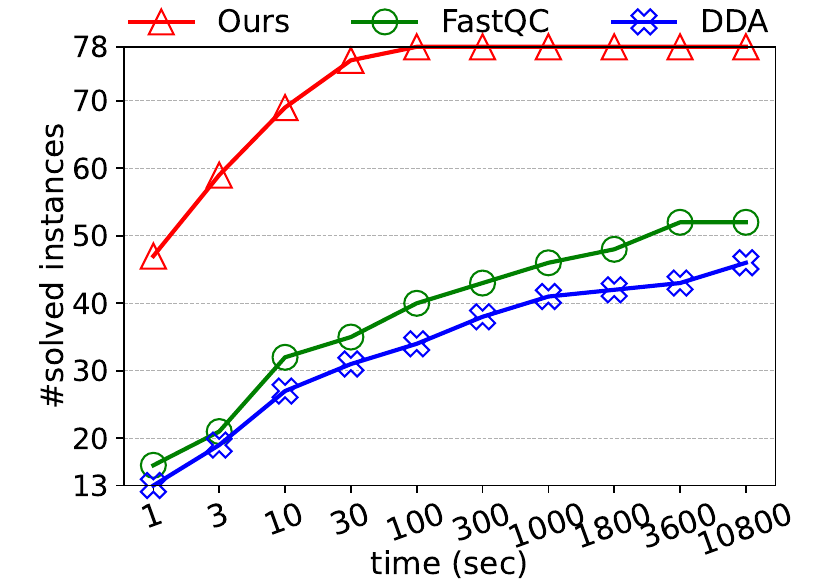}
        \label{fig:dimacs095}
    }
    \vspace{-0.2in}
    \caption{Number of solved instances on 10th DIMACS.}
    \label{Fig:dimacs10_various_gamma}
\end{figure}

\begin{figure}[t]
    \centering
    \subfigure[$\gamma = 0.65$]{
        \includegraphics[width=0.21\textwidth]{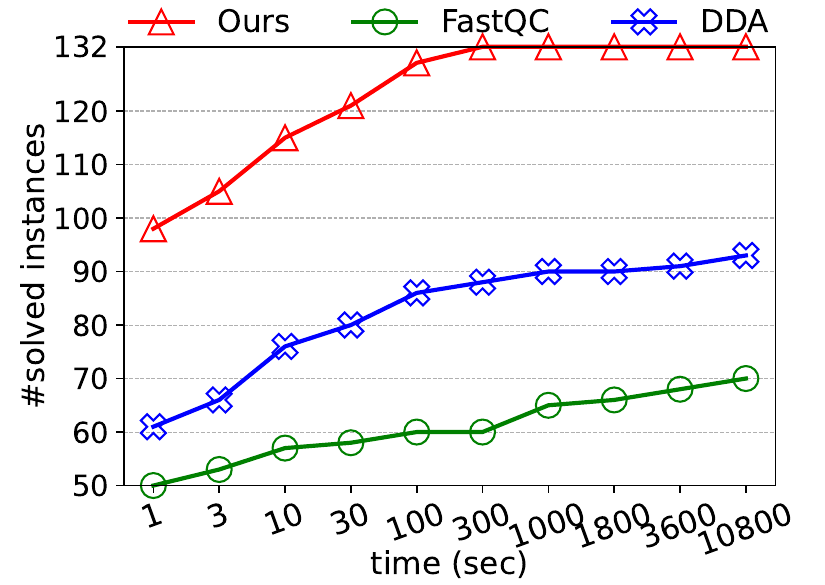}
        \label{fig:realworld065}
    }
    \subfigure[$\gamma = 0.75$]{
        \includegraphics[width=0.21\textwidth]{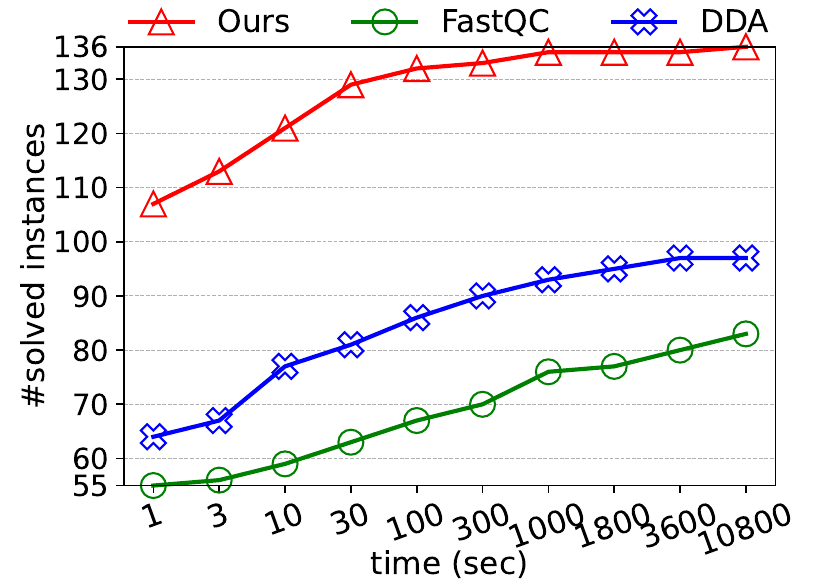}
        \label{fig:realworld075}
    }
    \subfigure[$\gamma = 0.85$]{
        \includegraphics[width=0.21\textwidth]{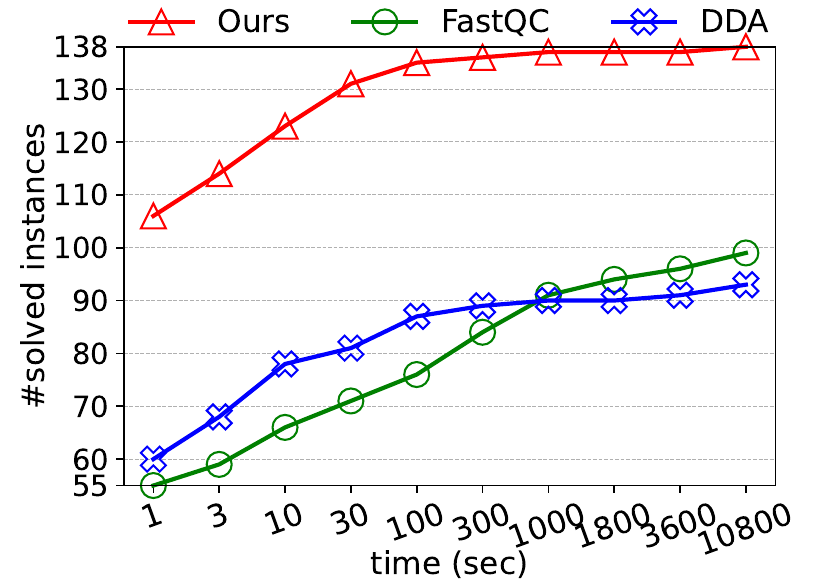}
        \label{fig:realworld085}
    }
    \subfigure[$\gamma = 0.95$]{
        \includegraphics[width=0.21\textwidth]{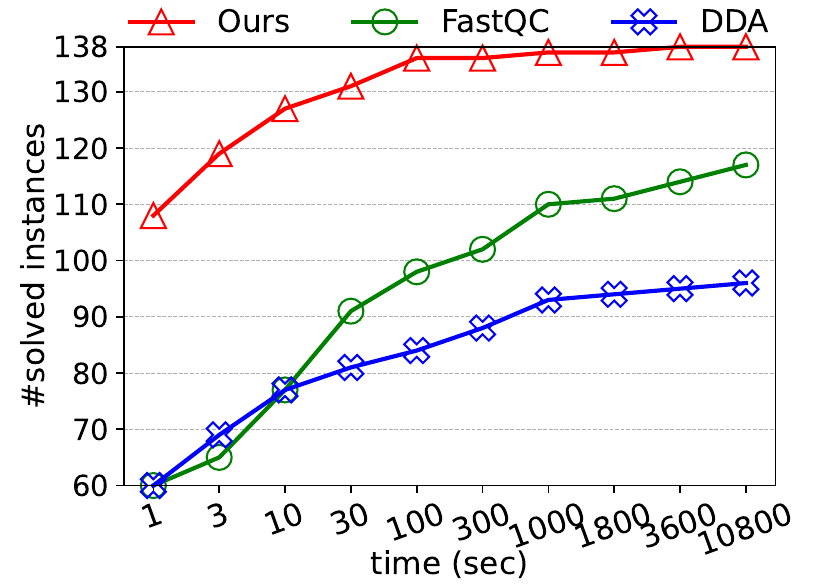}
        \label{fig:realworld095}
    }
    \vspace{-0.2in}
    \caption{Number of solved instances on real-world.}
    \label{Fig:realworld_various_gamma}
\end{figure}

\noindent \underline{\textbf{Number of solved instances.}} We first compare our algorithm \texttt{IterQC} with the baselines \texttt{DDA} and \texttt{FastQC} by considering the number of instances solved within 3-hour and 3-second limit on two collections in Figure~\ref{Fig:3h}.  In addition, we present the number of instances solved over time for both collections at $\gamma$ values of 0.65, 0.75, 0.85, and 0.95, in Figures~\ref{Fig:dimacs10_various_gamma} and~\ref{Fig:realworld_various_gamma}.
We have the following observations.  
\textbf{First}, our algorithm \texttt{IterQC} solves a greater number of instances across different values of $\gamma$, compared to the baselines \texttt{FastQC} and \texttt{DDA}. For example, in the real-world dataset with $\gamma = 0.7$ (in Figure~\ref{fig:realworld3h}), \texttt{IterQC} solves 135 out of 139 instances, while \texttt{DDA} and \texttt{FastQC} only solve 95 and 76 instances, respectively.
\textbf{Second}, in general, as the value of $\gamma$ decreases, \texttt{IterQC} tends to use relatively larger values of $k$ during each iteration, leading to higher computational costs and longer overall running times. As shown in Figures~\ref{fig:realworld3h},~\ref{fig:dimacs3s} and~\ref{fig:realworld3s}, the number of instances solved by \texttt{IterQC} generally increases with $\gamma$. However, this trend is less apparent in Figure~\ref{fig:dimacs3h}. Specifically, as $\gamma$ decreases from 0.8 to 0.7, \texttt{IterQC} solves more instances. This phenomenon may be due to the following factors. 
\texttt{IterQC} computes a heuristic upper bound $ub$ in the preprocessing stage. The gap between this $ub$ and the optimum solution $s^*$ is unpredictable for different values of $\gamma$. A smaller gap may result in a potentially shorter overall running time.
Additionally, in our graph reduction process in the preprocessing stage, a smaller $\gamma$ leads to a non-decreasing lower bound $lb$, which enhances the effectiveness of graph reduction and reduces subsequent search costs. Moreover, the increased computational complexity associated with smaller values of \(\gamma\) primarily arises from the branch-and-bound search process. Intuitively, as \(\gamma\) decreases, the relaxation of the clique condition becomes more significant, making it harder to prune branches that could previously be terminated early. The pseudo LB technique accelerates the branch-and-bound approach and helps reduce the increased difficulty introduced by smaller values of \(\gamma\).
\textbf{Third}, we observe in Figures~\ref{Fig:dimacs10_various_gamma} and~\ref{Fig:realworld_various_gamma} that the number of instances that \texttt{IterQC} can solve within 3 seconds exceeds the number solved by the other two baselines within three hours.
For example, on the 10th DIMACS graphs with $\gamma = 0.75$, \texttt{IterQC} solves 54 instances within 3 seconds, while \texttt{DDA} and \texttt{FastQC} solve 46 and 43 instances within 3 hours, respectively. Moreover, as shown in Figure~\ref{fig:realworld085}, \texttt{IterQC} can complete 105 instances in 1 second, while \texttt{DDA} and \texttt{FastQC} finish 93 and 99 instances within 3 hours, respectively.

\noindent \underline {\textbf{Performance on representative instances.}} 
The runtime performance comparison between \texttt{IterQC} and the two baseline algorithms with $\gamma = 0.75$ across 30 representative instances is shown in Table~\ref{table:representitive_75}. 
As illustrated in the table, \texttt{IterQC} consistently demonstrates superior efficiency, outperforming both baselines \texttt{FastQC} and \texttt{DDA} across nearly all instances. In particular, \texttt{IterQC} can solve all the graph instances, while both baseline algorithms \texttt{FastQC} and \texttt{DDA} exhibit a high occurrence of timeouts, failing to yield solutions within the 3-hour limit. Specifically, \texttt{FastQC} and \texttt{DDA} fail to solve 23 and 16 instances, respectively. Furthermore, \texttt{IterQC} successfully solves 14 out of the 30 representative instances where both baseline algorithms exceed the 3-hour time limit. For example, on G3, \texttt{IterQC} uses only 0.37 second, while both baselines cannot complete in 3 hours, achieving at least a 29,000$\times$ speed-up. These results further suggest the efficiency superiority of \texttt{IterQC} over both baseline algorithms. The superior performance of \texttt{IterQC}, particularly compared to \texttt{FastQC}, may be due to the hereditary property of the \( k \)-plex, which allows for more efficient pruning during the branch-and-bound search.
However, in rare cases, the computational overhead of \texttt{IterQC} exceeds that of \texttt{DDA}, such as in G19, G23, and G26. This is because \texttt{DDA} adopts an iterative approach based on the IP solver CPLEX, which differs fundamentally from the branch-and-bound-based approaches used by \texttt{IterQC} and \texttt{FastQC}. As a general-purpose solver, CPLEX is not specifically optimized for the $\gamma$-quasi-clique problem and is difficult to tailor for it. Consequently, while \texttt{DDA} may occasionally perform better in specific instances, \texttt{IterQC} generally outperforms \texttt{DDA} in most cases.

\begin{table}[t]
    \centering    
    \begin{minipage}{0.48\textwidth}
        \centering
        \captionsetup{font=small}
        \caption{Runtime performance (in seconds) of \texttt{IterQC}, \texttt{FastQC}, and \texttt{DDA} on 30 instances with $\gamma = 0.75$.}
        \vspace{-0.15in}
        \label{table:representitive_75}
        \scalebox{0.8}{
        \begin{tabular}{l|l l l|l|l l l}
        \hline
        \textbf{ID} & \texttt{IterQC} & \texttt{FastQC} & \texttt{DDA} & \textbf{ID} & \texttt{IterQC} & \texttt{FastQC} & \texttt{DDA} \\ \hline
        G1 & \textbf{0.03} & OOT & 0.16 & G16 & \textbf{5.70} & OOT & OOT \\ 
        G2 & \textbf{0.004} & 652.72 & 0.10 & G17 & \textbf{0.24} & 1073.59 & 0.34 \\ 
        G3 & \textbf{0.37} & OOT & OOT & G18 & \textbf{341.59} & OOT & OOT \\ 
        G4 & \textbf{0.02} & 4.93 & 69.71 & G19 & 4.06 & OOT & \textbf{2.73} \\ 
        G5 & \textbf{0.91} & OOT & OOT & G20 & \textbf{1.38} & OOT & 1.49 \\ 
        G6 & \textbf{4.46} & 80.61 & OOT & G21 & \textbf{1.87} & OOT & 2.14 \\ 
        G7 & \textbf{0.24} & OOT & 3059.20 & G22 & \textbf{8.70} & OOT & OOT \\ 
        G8 & \textbf{0.32} & OOT & OOT & G23 & 3.98 & OOT & \textbf{3.45} \\ 
        G9 & \textbf{0.02} & 3.88 & 0.20 & G24 & \textbf{13.07} & OOT & OOT \\ 
        G10 & \textbf{2.46} & OOT & OOT & G25 & 18.00 & OOT & \textbf{9.30} \\ 
        G11 & \textbf{17.71} & OOT & OOT & G26 & \textbf{14.86} & OOT & OOT \\ 
        G12 & \textbf{0.41} & OOT & OOT & G27 & \textbf{18.64} & OOT & OOT \\ 
        G13 & \textbf{0.26} & 34.60 & 297.66 & G28 & \textbf{4.75} & OOT & 42.54 \\ 
        G14 & \textbf{3.28} & 106.46 & OOT & G29 & \textbf{9.377} & OOT & 9.378 \\ 
        G15 & \textbf{772.93} & OOT & OOT & G30 & \textbf{25.50} & OOT & OOT \\ \hline
        \end{tabular}}
    \end{minipage}%
\end{table}

\noindent \underline {\textbf{Scalability test.}} We use G30 for the scalability test, which has the most vertices among the representative graphs. In our experiment, we randomly extract 20\% to 100\% of the vertices and test the performance of \texttt{IterQC} and two baselines with $\gamma$ values of 0.65, 0.75, 0.85, and 0.95 in Figure~\ref{Fig:scale_various_gamma}. The results demonstrate two findings: First, in almost all cases, \texttt{IterQC} consistently achieves the shortest runtime.
Second, across all four different values of $\gamma$, as the ratio increases—indicating a larger graph size -- the increase in the runtime of \texttt{IterQC}  is significantly smaller compared to the other algorithms. For example, in Figure~\ref{fig:scale085}, when the ratio increases from 0.2 to 0.8, the runtime of \texttt{DDA} rises from 0.52 seconds to 1817.60 seconds, whereas \texttt{IterQC} only increases from 0.49 seconds to 8.36 seconds. As for \texttt{FastQC}, at a ratio of 0.2, the runtime is 1086.88 seconds, but when the ratio reaches 0.4, it exceeds the time limit (10,800 seconds). These results demonstrate the scalability of \texttt{IterQC}.

\begin{figure}[t]
    \centering
    \subfigure[$\gamma = 0.65$]{
        \includegraphics[width=0.21\textwidth]{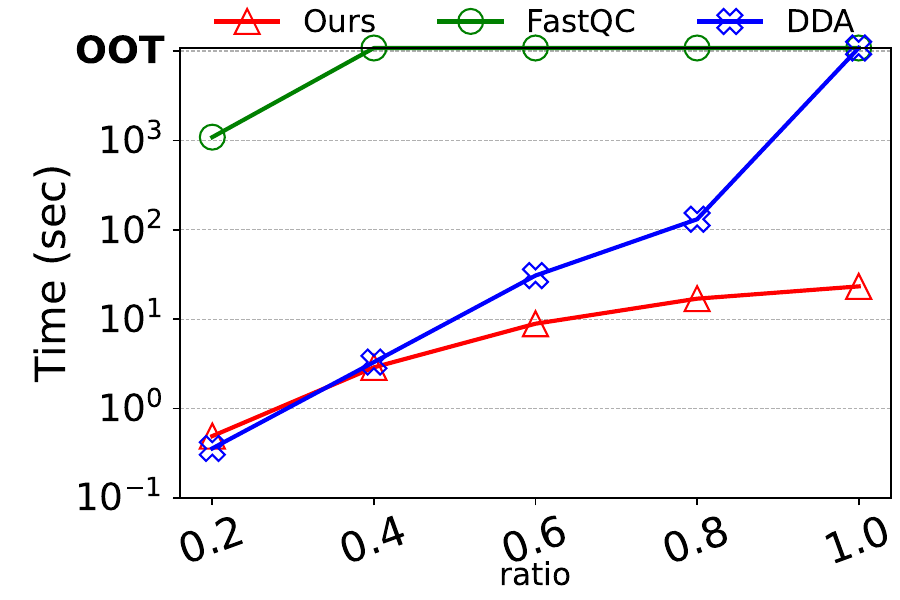}
        \label{fig:scale065}
    }
    \subfigure[$\gamma = 0.75$]{
        \includegraphics[width=0.21\textwidth]{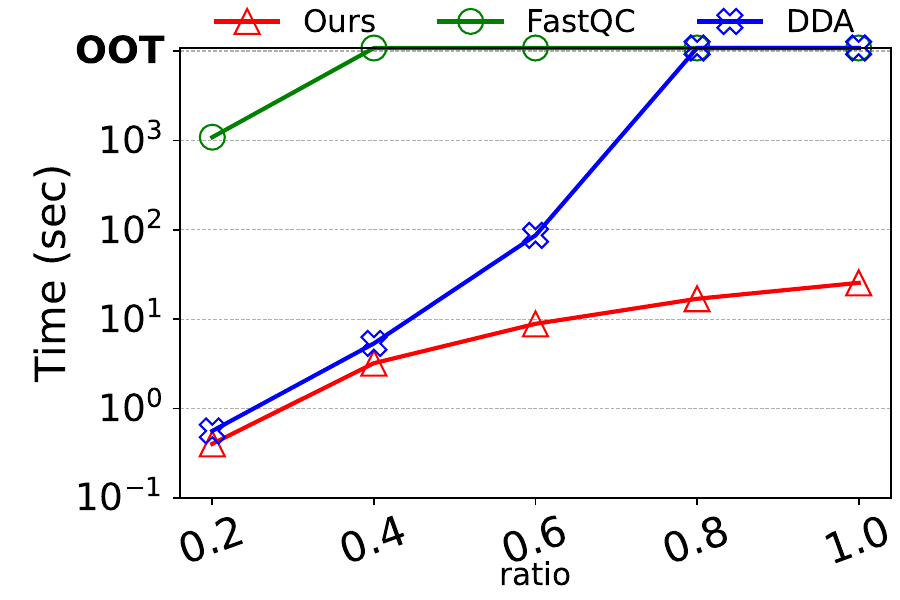}
        \label{fig:scale075}
    }
    \subfigure[$\gamma = 0.85$]{
        \includegraphics[width=0.21\textwidth]{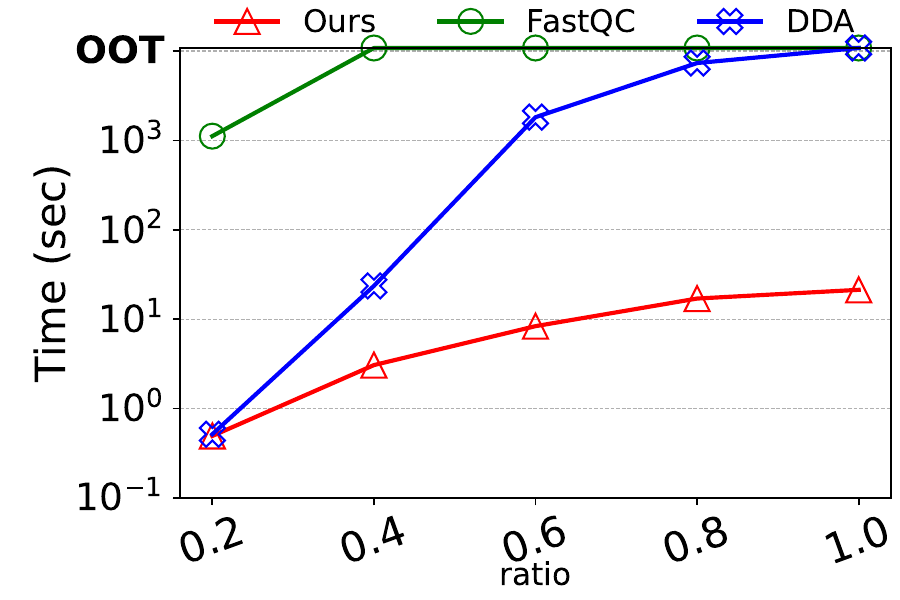}
        \label{fig:scale085}
    }
    \subfigure[$\gamma = 0.95$]{
        \includegraphics[width=0.21\textwidth]{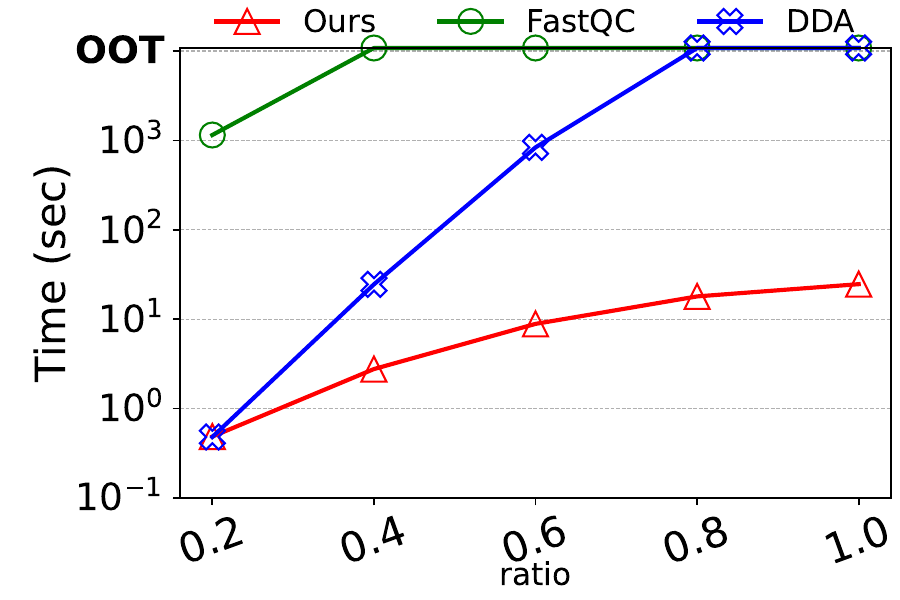}
        \label{fig:scale095}
    }
    \vspace{-0.2in}
    \caption{Scalability test on G30.}
    \label{Fig:scale_various_gamma}
\end{figure}

\subsection{Ablation Studies}
We conduct ablation studies to evaluate the effectiveness of the techniques of preprocessing and pseudo LB proposed in Section~\ref{sec:improved}.
We compare \texttt{IterQC} with the following variants: 
\begin{itemize}[leftmargin=*]
    \item {\texttt{IterQC-PP}:} it removes the preprocessing technique in \texttt{IterQC}, which includes (1) the initial estimation of lower and upper bounds, and (2) graph reduction. Specifically, \texttt{IterQC-PP} replaces Lines 1-3 in Algorithm~\ref{alg:improved-framwork} with $ub \gets |V|$.

    \item {\texttt{IterQC-PLB}:} it removes the pseudo lower bound $pseudo\mbox{-}lb$ and utilizes the true heuristic lower bound by replacing Line 3 in Algorithm~\ref{alg:kplex-search} with $pseudo\mbox{-}lb \gets lb\mbox{-}plex$.
\end{itemize} 

Table~\ref{table:Compare_Ablation} presents the runtime performance of \texttt{IterQC}, \texttt{IterQC-PP}, and \texttt{IterQC-PLB} on 30 representative instances with $\gamma = 0.75$. 

\begin{table}[t]
    \centering
    \captionsetup{font=small}
    \caption{Runtime performance (in seconds) of \texttt{IterQC}, \texttt{IterQC-PP}, and \texttt{IterQC-PLB} on 30 instances with $\gamma = 0.75$.}
    \vspace{-0.15in}
    \label{table:Compare_Ablation}
    \scalebox{0.8}{
    \begin{tabular}{c|c c c|c|c c c}
        \hline
        \textbf{ID} & \texttt{IterQC} & \texttt{-PP} & \texttt{-PLB} & \textbf{ID} & \texttt{IterQC} & \texttt{-PP} & \texttt{-PLB} \\ \hline
        G1 & \textbf{0.034} & 0.10 & 0.035 & G16 & \textbf{5.70} & 35.54 & 6.31 \\ 
        G2 & \textbf{0.0043} & 0.039 & 0.0044 & G17 & \textbf{0.24} & 2.96 & 0.32 \\ 
        G3 & \textbf{0.37} & 2.21 & 0.41 & G18 & \textbf{341.59} & 4946.34 & OOT \\ 
        G4 & 0.019 & 0.08 & \textbf{0.015} & G19 & \textbf{4.06} & 16.96 & 4.14 \\ 
        G5 & 0.91 & 6.64 & \textbf{0.80} & G20 & \textbf{1.38} & 17.11 & 1.73 \\ 
        G6 & \textbf{4.46} & 6.65 & 4.86 & G21 & \textbf{1.87} & 29.47 & 2.50 \\ 
        G7 & 0.24 & 1.36 & \textbf{0.17} & G22 & \textbf{8.70} & 49.07 & 9.29 \\ 
        G8 & \textbf{0.32} & 1.24 & 0.61 & G23 & \textbf{3.98} & 37.37 & 5.15 \\ 
        G9 & 0.021 & 0.17 & \textbf{0.018} & G24 & \textbf{13.07} & 53.49 & 14.76 \\ 
        G10 & \textbf{2.46} & 2.51 & 29.79 & G25 & \textbf{18.00} & 81.90 & 18.60 \\ 
        G11 & \textbf{17.71} & 97.73 & 134.01 & G26 & \textbf{14.86} & 113.84 & 16.36 \\ 
        G12 & \textbf{0.41} & 1.05 & 0.66 & G27 & \textbf{18.64} & 79.77 & 21.07 \\ 
        G13 & 0.26 & 0.99 & \textbf{0.22} & G28 & \textbf{4.75} & 87.08 & 4.94 \\ 
        G14 & \textbf{3.28} & 11.15 & 3.42 & G29 & \textbf{9.38} & 191.62 & 9.99 \\ 
        G15 & \textbf{772.93} & 1385.18 & OOT & G30 & \textbf{25.50} & 195.29 & 25.97 \\ \hline
    \end{tabular}
    }
\end{table}

\noindent \underline{\textbf{Effectiveness of the preprocessing technique.}} From Table~\ref{table:Compare_Ablation}, we observe that \texttt{IterQC} consistently outperforms \texttt{IterQC-PP}, achieving a speedup factor of at least 5 in 17 instances and at least 10 in 6 instances, with a remarkable speedup factor of 20.32 on G29.
We also summarize additional information for our preprocessing technique in Table~\ref{table:Preproccess}, which details the percentages of vertices and edges pruned during preprocessing, as well as the lower and upper bounds ($lb$ and $ub$) for the optimum solution \( s^* \). 
From Table~\ref{table:Preproccess}, we observe that in 3 instances (G17, G20, and G29), the preprocessing technique prunes all vertices and edges, effectively obtaining the solution directly, while in 14 instances, it removes at least 90\% of the vertices.
Moreover, across the 30 instances, the preprocessing technique enables the iteration process to start from a smaller initial value (closer to the optimum solution $s^*$), as indicated by the upper bound $ub$ (in contrast to the trivial upper bound $|V|$ in Table~\ref{table:representive_instances}).

\begin{table}[t]
    \centering
    \captionsetup{font=small}
    \caption{Preprocessing information with $\gamma = 0.75$, where Red-V/Red-E represent the percentages of reduced vertices/edges.}
    \vspace{-0.15in}
    \label{table:Preproccess}
    \scalebox{0.8}{
    \begin{tabular}{c|c|c|c|c|c}
    \hline
        \textbf{ID} & Red-V (\%) & Red-E (\%) & $lb$ & $ub$ & $s^*$\\ 
        \hline
        G1 & 98.27 & 87.46 & 116 & 119 & 117 \\ 
        G2 & 79.85 & 59.63 & 230 & 233 & 230 \\ 
        G3 & 64.76 & 44.58 & 74 & 87 & 74 \\ 
        G4 & 98.13 & 82.39 & 17 & 31 & 21 \\ 
        G5 & 58.88 & 37.46 & 57 & 84 & 66 \\ 
        G6 & 1.07 & 0.57 & 24 & 48 & 30 \\ 
        G7 & 92.72 & 75.63 & 54 & 71 & 59 \\ 
        G8 & 96.73 & 67.96 & 57 & 91 & 58 \\ 
        G9 & 83.63 & 79.28 & 6 & 7 & 6 \\ 
        G10 & 95.91 & 62.17 & 29 & 79 & 33 \\ \hline
        G11 & 39.06 & 7.39 & 13 & 124 & 45 \\ 
        G12 & 97.86 & 85.13 & 37 & 69 & 45 \\ 
        G13 & 0.00 & 0.00 & 3 & 7 & 6 \\ 
        G14 & 0.00 & 0.00 & 6 & 13 & 10 \\ 
        G15 & 97.46 & 53.76 & 86 & 316 & 127 \\ 
        G16 & 99.79 & 95.67 & 2209 & 2945 & 2209 \\ 
        G17 & 100.00 & 100.00 & 5 & 5 & 5 \\ 
        G18 & 75.40 & 50.63 & 66 & 308 & 130 \\ 
        G19 & 99.53 & 99.45 & 20 & 27 & 25 \\ 
        G20 & 100.00 & 100.00 & 5 & 5 & 5 \\ \hline
        G21 & 0.00 & 0.00 & 3 & 5 & 5 \\ 
        G22 & 0.00 & 0.00 & 2 & 4 & 2 \\ 
        G23 & 0.00 & 0.00 & 2 & 5 & 5 \\ 
        G24 & 0.00 & 0.00 & 2 & 4 & 2 \\ 
        G25 & 99.99 & 99.99 & 25 & 28 & 27 \\ 
        G26 & 0.00 & 0.00 & 5 & 7 & 6 \\ 
        G27 & 0.00 & 0.00 & 2 & 4 & 2 \\ 
        G28 & 0.00 & 0.00 & 3 & 5 & 5 \\ 
        G29 & 100.00 & 100.00 & 5 & 5 & 5 \\ 
        G30 & 99.15 & 96.69 & 10 & 23 & 10 \\ \hline
    \end{tabular}
    }
\end{table}

\noindent \underline{\textbf{Effectiveness of the pseudo LB technique.}} We can see in Table~\ref{table:Compare_Ablation} that applying the pseudo LB technique leads to an improved performance in 25 of these 30 instances. Moreover, compared to \texttt{IterQC-PLB}, \texttt{IterQC} successfully solves 2 additional \textbf{OOT} instances, i.e., G15 and G18. For G18, \texttt{IterQC} solves in 341.59 seconds while \texttt{IterQC-PLB} times out, implying a speedup factor of at least 31.62. 
This improvement is due to the acceleration of the branch-and-bound search by the pseudo LB technique in Algorithm~\ref{alg:kplex-search}, particularly by leveraging the graph structure: dense local regions increase branch-and-bound search costs, leading to greater speedups.  
Conversely, instances like G4, G5, G7, G9, and G13 show lower effectiveness when the running time is dominated by the computations of the heuristic lower bound in Line 4 of Algorithm~\ref{alg:kplex-search}. Despite this, even in these instances, the impact on running time is minor, with all such instances completing in under 1 second.  

\section{Related Work}\label{sec:related-work}
\noindent \underline{\textbf{Maximum $\gamma$-quasi-clique search problem.}}
The maximum $\gamma$-quasi-clique search problem is NP-hard~\cite{MIH99,Pastukhov2018gamma} and W[1]-hard parameterized by several graph parameters~\cite{Baril2021hardness,Baril2024hardness}.
The state-of-the-art exact algorithm for solving this problem is \texttt{DDA}~\cite{Pastukhov2018gamma} and extensively discussed in Section~\ref{sec-intro}.
In contrast, Bhattacharyya and Bandyopadhyay~\cite{BB09} provided a greedy heuristic. Further studies~\cite{Chou2015oneVerticeQC,LL16} addressed related problems of finding the largest maximal quasi-cliques that include a given target vertex or vertex set in a graph. Moreover, Marinelli et al.~\cite{marinelli2021lp} proposed an IP-based method to compute upper bounds for the maximum \( \gamma \)-quasi-clique.

\noindent \underline{\textbf{Maximal $\gamma$-quasi-clique enumeration problem.}}
A closely related problem is the enumeration of all maximal $\gamma$-quasi-cliques in a given graph~\cite{LW08,khalil2022parallel,yu2023fast}, where a $\gamma$-QC $g$ is \emph{maximal} if no supergraph $g'$ of $g$ is also a $\gamma$-QC. Several branch-and-bound algorithms have been proposed to tackle this problem by using multiple pruning techniques to reduce the search space during enumeration. In particular, Liu and Wong~\cite{LW08}, Guo et al.~\cite{guo2020scalable}, and Khalil et al.~\cite{khalil2022parallel} developed such algorithms to improve efficiency. Recently, Yu and Long~\cite{yu2023fast} introduced \texttt{FastQC}, the current state-of-the-art algorithm combining pruning and branching co-design approach. 
We remark that the maximum $\gamma$-QC search problem can be solved using algorithms designed for maximal $\gamma$-QC enumeration, as the maximum $\gamma$-QC is always a maximal one in the graph. In our experimental studies, we adapt the state-of-the-art maximal $\gamma$-QC enumeration algorithm, \texttt{FastQC}, as the baseline method to solve our problem.
Some studies explored different problem variants. For example, Sanei-Mehri et al.~\cite{sanei-mehri2021largest} focused on the top-$k$ variant, Guo et al.~\cite{guo2022directed} studied the problem in directed graphs, and others considered graph databases instead of a single graph~\cite{JP09,zeng2007out}. 

\noindent \underline{\textbf{Other cohesive subgraph mining problems.}}  
Another approach to cohesive subgraph mining involves relaxing the clique definition from the perspective of edges. This approach gives rise to the concept of the edge-based $\gamma$-quasi-clique~\cite{Abello2002egde,CondeCespedes2018edge,Pattillo2013edge}, which is also referred to as pseudo-cliques~\cite{Uno2010}, dense subgraphs~\cite{Long2010}, or near-cliques~\cite{Brakerski2011, Tadaka2016}. In this cohesive subgraph model, the total number of edges in a subgraph must be at least $\gamma \cdot \binom{n}{2}$. Very recently, Rahman et al.~\cite{rahman2024pseudo} introduced a novel pruning strategy based on Tur\'{a}n's theorem~\cite{Jain2017} to obtain an exact solution, building on the PCE algorithm proposed by Uno~\cite{Uno2010}.
Similar to the $\gamma$-quasi-clique problem, several studies have also explored non-exact approaches to solve the edge-based $\gamma$-quasi-clique problem~\cite{Abello2002egde, Brunato2008, Chen2021,LiuZWZL24}. For instance, Tsourakakis et al.~\cite{Tsourakakis2013} proposed an objective function that unifies the concepts of average-degree-based quasi-cliques and edge-based $\gamma$-quasi-clique. 
There also exist many other types of cohesive subgraphs, including $k$-plex \cite{Dai2022kplex, Wang2022kplex,Zhou2020kplex,chang2022efficient, Gao2018, jiang2023refined, jiang2021, wang2023kplex, Xiao2017maxkplex, zhou2021improving}, $k$-defective clique \cite{Chang2023, Dai2023defect, Gao2022defect}, and densest subgraph \cite{Ma2021densest, Xu2024densest}.
Moreover, the topic of cohesive subgraphs has also been widely studied in other types of graphs, including bipartite graphs \cite{Chen2021bipartite, Dai2023bipartite, Luo2022bipartite, Yu2022bipartite, Yu2023bipartite,Yu2021biplex}, directed graphs \cite{Gao2024directed}, temporal graphs \cite{Bentert2019temp}, and uncertain graphs \cite{Dai2022uncertain}. For an overview on cohesive subgraphs, see the excellent books and survey, e.g.,~\cite{Chang2018cosub, Fang2020cosub, Fang2021cosub, Huang2019cosub, Lee2010cosub}.

\section{Conclusion}\label{sec:conclusion}
In this paper, we studied the maximum $\gamma$-quasi clique problem and proposed an iterative framework incorporating two novel techniques: the pseudo lower bound and preprocessing. Extensive experiments demonstrated the superiority of our algorithm \texttt{IterQC} over state-of-the-art methods \texttt{DDA} and \texttt{FastQC}. In future work, we aim to extend our iterative approach to other cohesive graph models.

\begin{acks}
This research is partially supported by the National Natural Science Foundation of China (No. 62102117), by the Shenzhen Science and Technology Program (No. GXWD20231129111306002), and by the Key Laboratory of Interdisciplinary Research of Computation and Economics (Shanghai University of Finance and Economics), Ministry of Education.
This research is also partially supported by the National Natural Science Foundation of China (No. 62472125), the Natural Science Foundation of Guangdong Province, China (No. 2025A1515011258), and Shenzhen Sustained Support for Colleges \& Universities Program (No. GXWD20231128102922001). 
This research is also supported by the Ministry of Education, Singapore, under its Academic Research Fund (Tier 2 Award MOE-T2EP20221-0013 and Tier 1 Award (RG20/24)). Any opinions, findings and conclusions or recommendations expressed in this material are those of the author(s) and do not reflect the views of the Ministry of Education, Singapore.
\end{acks}

\clearpage
\bibliographystyle{ACM-Reference-Format}
\bibliography{ref}

\appendix

\section{Omitted Proofs}
\subsection{Omitted Proofs in Section~\ref{sec:basic-iterative}}\label{sec:proof-basic}

\begin{proof}[Proof of Lemma~\ref{lem:input-qc}]
     When $G$ is a $\gamma$-quasi-clique, we have $d_{G}(v) \geq \gamma \cdot (|V| - 1)$, $\forall v \in V$. Since $d_G(v)$ is an integer for each $v \in V$, it follows that $d_{G}(v) \geq \lceil \gamma \cdot (|V| - 1) \rceil \geq |V| - (\lfloor (1 - \gamma) \cdot (|V| - 1)\rfloor + 1) = |V| - k_1$. Thus, $G$ is a $k_1$-plex and $\texttt{solve-k}(k_1) = |V|$, which implies that $s_1 = s_0$. At this point, we have $k_1 = \texttt{get-k}(s_1)$, and the algorithm results in $s_1 = |V|$, which completes the proof. 
\end{proof}

\begin{proof}[Proof of Lemma~\ref{lem:seq-s-non-identical}]
     Assume, to the contrary, that $s_i = s_{i+1}$ for $0 \leq i \leq p-1$. 
     Then, it follows that $k_{i+1} = \texttt{get-k}(s_i) = \texttt{get-k}(s_{i+1})$, which satisfies the termination condition in Line 4. This implies that the algorithm terminates at the \((i+1)\)-st iteration, thus \(i = p\), leading to a contradiction to $i \leq p - 1$.
\end{proof}

\begin{proof}[Proof of Corollary~\ref{cor:sequence}]
    By Lemma~\ref{lem:seq-s}, the sequence is strictly decreasing. Further, it is easy to see that $s_p \geq 1$ since an induced subgraph with a single vertex is a trivial solution to the maximum $k$-plex problem with any $k \geq 1$. 
    Moreover, as $\{s_0,s_1,\ldots,s_p\}$ is an integer sequence, the difference between any two consecutive elements is at least 1, which implies that $p\leq s_0 = |V|=n$.
\end{proof}

\begin{proof}[Proof of Lemma~\ref{lem:seq-s-least}]
    Recall that $s^*$ is the size of the optimum solution.
    Let $u$ be the vertex with the minimum degree in the largest $\gamma$-quasi-clique $g^*$. Let $\gamma^* = \frac{d_{g^*}(u)}{s^* - 1}$. Then we have 
    $s^* = \texttt{solve-plex}\left(1 + \left\lfloor (1 - \gamma^*) \cdot (s^* - 1) \right\rfloor \right)$. Additionally, from the definition of a $\gamma$-quasi-clique, it also follows that $\gamma^* \geq \gamma$.
    We then use the mathematical induction to prove.\\
    \textcircled{\scriptsize{1}} \normalsize\enspace $i=0$. $s_{i} = s_0 = |V| \geq s^* $. The base case holds true.\\
    \textcircled{\scriptsize{2}} \normalsize\enspace $i \geq 1$. Assume that the induction holds for $i-1$, i.e., $s_{i-1} \geq s^*$, it follows that 
    \begin{align*}
        s_{i} &= \texttt{solve-plex}(k_{i}) = \texttt{solve-plex}(\texttt{get-k}(s_{i-1})) \\
            &= \texttt{solve-plex}\left(1 + \left\lfloor (1-\gamma) \cdot (s_{i-1} - 1) \right\rfloor\right) \\
        &\geq \texttt{solve-plex}\left(1 + \left\lfloor (1 - \gamma^*) \cdot (s_{i-1} - 1) \right\rfloor\right) \\
        &\geq \texttt{solve-plex}\left(1 + \left\lfloor (1-\gamma^*) \cdot (s^* - 1) \right\rfloor\right) = s^*.
    \end{align*}
Based on \textcircled{\scriptsize{1}}, \textcircled{\scriptsize{2}}, and the principle of mathematical induction, we complete the proof of Lemma~\ref{lem:seq-s-least}.
\end{proof}

\subsection{Correctness Proof of Algorithm~\ref{alg:improved-iterative-search}}\label{sec:proof-improved}

We first show a property on the sequence generated by our basic iterative framework in Algorithm~\ref{alg:basic-framework}. To simplify the proofs, in the following discussion, let \( s_p \) denote the last element of the sequence \( \{s_0, s_1, \ldots, s_p\} \) generated by the iterative frameworks (either Algorithm~\ref{alg:basic-framework} or Algorithm~\ref{alg:improved-iterative-search}). It is important to highlight that, in contrast to \(s_p\) in the correctness proof of the basic iterative framework, which represents the penultimate computed value, here \( s_p \) specifically corresponds to the value at the iteration where Algorithm~\ref{alg:kplex-search} terminates, i.e., \( k_p = \texttt{get-k}(s_p)\).

\begin{lemma}\label{cor:iteration with ub}
    For any integer $\overline{s} \geq s^*$, the last element $s_p$ of the sequence $\{\overline{s},s_1,s_2,\ldots,s_p\}$ generated by Algorithm~\ref{alg:basic-framework} is equal to $s^*$.
\end{lemma}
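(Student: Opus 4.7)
My plan is to mirror the correctness proof of Algorithm~\ref{alg:basic-framework} (i.e., Lemmas~\ref{lem:seq-s-non-identical}--\ref{lem:correct-ans}), adapting only the base case of the sequence and the monotonicity argument for the generalized starting point $s_0 = \overline{s}$. The key observation I will rely on is that a $k$-plex of size $s$ with $k \leq \texttt{get-k}(s)$ is automatically a $\gamma$-QC: this follows from a short calculation using $s - \texttt{get-k}(s) \geq \gamma(s-1)$, and is the same fact that drives the final step of Lemma~\ref{lem:correct-ans}.

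First, I would observe that the inductive step of Lemma~\ref{lem:seq-s-least} relies only on the hypothesis $s_{i-1} \geq s^*$, so the proof goes through verbatim once the base case is replaced by $s_0 = \overline{s} \geq s^*$, yielding $s_i \geq s^*$ for every $i$ in the generated sequence. Second, I would establish non-increasing monotonicity. The inductive step $s_i \leq s_{i-1} \Rightarrow s_{i+1} \leq s_i$ is immediate from the monotonicity of $\texttt{get-k}$ and $\texttt{solve-plex}$ ($k_{i+1} = \texttt{get-k}(s_i) \leq \texttt{get-k}(s_{i-1}) = k_i$ implies $s_{i+1} \leq s_i$), so the only new work is the base case $s_1 \leq s_0$. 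Suppose for contradiction that $s_1 > \overline{s} \geq s^*$. Since $\texttt{get-k}$ is non-decreasing, $\texttt{get-k}(s_1) \geq \texttt{get-k}(\overline{s}) = k_1$, and the key observation then implies that the max $k_1$-plex of size $s_1$ is a $\gamma$-QC of size strictly exceeding $s^*$, contradicting the maximality of $s^*$. Third, Lemma~\ref{lem:seq-s-non-identical} generalizes unchanged, so the non-increasing sequence is in fact strictly decreasing until termination; being an integer sequence bounded below by $s^* \geq 1$, it terminates in finitely many iterations. Finally, at termination we have $k_p = \texttt{get-k}(s_p)$, so by the key observation the max $k_p$-plex returned is a $\gamma$-QC, giving $s_p \leq s^*$; combined with the first step, we conclude $s_p = s^*$.

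The main obstacle is the base case of monotonicity, $s_1 \leq s_0$, because the original proof of Lemma~\ref{lem:seq-s} leveraged the specific initialization $s_0 = |V|$ through Lemma~\ref{lem:input-qc}, whereas here $\overline{s}$ may be well below $|V|$. The contradiction argument based on the $k$-plex/$\gamma$-QC equivalence cleanly closes this gap, and in fact makes the generalized proof no harder than the original.
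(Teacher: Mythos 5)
Your proof is correct and follows essentially the same route as the paper, which simply reruns the arguments of Lemmas~\ref{lem:seq-s-least} and~\ref{lem:correct-ans} with $\overline{s}$ in place of the trivial bound $|V|$: the induction giving $s_i \geq s^*$ needs only the base case $s_0=\overline{s}\geq s^*$, and the terminal condition $k_p=\texttt{get-k}(s_p)$ turns the returned $k_p$-plex into a $\gamma$-QC, forcing $s_p=s^*$. Your extra care with the monotonicity base case $s_1\leq\overline{s}$ (via the observation that a $k$-plex of size $s$ with $k\leq\texttt{get-k}(s)$ is a $\gamma$-QC) is a sound way to secure termination, a detail the paper leaves implicit.
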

    \begin{proof}
        The proof is similar to the proofs of Lemmas~\ref{lem:seq-s-least} and~\ref{lem:correct-ans}, which are based on a trivial upper bound, i.e., $|V| \geq s^*$. By using $\overline{s}$ instead of $|V|$, we can prove Lemma~\ref{cor:iteration with ub}.
    \end{proof}

 Subsequently, in Lemma~\ref{lem:connection}, we make a connection between the sequences generated by Algorithm~\ref{alg:basic-framework} and Algorithm~\ref{alg:improved-iterative-search}. 
 
\begin{lemma}\label{lem:connection}
    Consider the $i$-th iteration of Lines 2-6 in Algorithm~\ref{alg:improved-iterative-search}, where \texttt{Plex-Search}$(G,k,ub\mbox{-}plex)$ is called in Line 3. We let $\overline{s} \gets \texttt{solve-plex}(k)$, i.e., $\overline{s}$ is the size of maximum $k$-plex. If we have $k=\texttt{get-k}(ub\mbox{-}plex)$ and $ub\mbox{-}plex = s_{i-1} \geq s^*$, then $s_i \geq \overline{s} \geq s^*$.
\end{lemma}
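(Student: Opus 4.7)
The plan is to verify the two inequalities $s_i \geq \overline{s}$ and $\overline{s} \geq s^*$ separately, using the hypotheses $k = \texttt{get-k}(s_{i-1})$ and $s_{i-1} \geq s^*$, together with the arithmetic identity $s - \texttt{get-k}(s) = \lceil \gamma(s-1)\rceil$ which is implicit in the proof of Lemma~\ref{lem:input-qc}.

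For the lower bound $\overline{s} \geq s^*$, I would argue that the maximum $\gamma$-quasi-clique $g^*$, of size $s^*$, is itself a $k$-plex. Since \texttt{get-k} is non-decreasing and $s_{i-1} \geq s^*$, we have $k \geq \texttt{get-k}(s^*)$. The integrality of degrees combined with the $\gamma$-QC definition gives $d_{g^*}(v) \geq \lceil \gamma(s^*-1)\rceil = s^* - \texttt{get-k}(s^*) \geq s^* - k$ for every $v \in V(g^*)$, so $g^*$ is a $k$-plex and $\overline{s} = \texttt{solve-plex}(k) \geq s^*$.

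For the upper bound $s_i \geq \overline{s}$, the crucial subclaim is that $ub\mbox{-}plex = s_{i-1}$ genuinely upper-bounds the maximum $k$-plex size, i.e., $\overline{s} \leq s_{i-1}$. I would prove this by contradiction: suppose a $k$-plex $P$ satisfies $|P| = \overline{s} > s_{i-1}$. Monotonicity of \texttt{get-k} gives $\texttt{get-k}(\overline{s}) \geq \texttt{get-k}(s_{i-1}) = k$, so each $v \in V(P)$ has $d_P(v) \geq \overline{s} - k \geq \overline{s} - \texttt{get-k}(\overline{s}) = \lceil \gamma(\overline{s}-1)\rceil$, making $P$ a $\gamma$-QC of size $\overline{s} > s_{i-1} \geq s^*$, which contradicts the maximality of $s^*$. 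With $\overline{s} \leq s_{i-1}$ established, together with $lb\mbox{-}plex \leq \overline{s}$ from the heuristic, I would perform case analysis on Algorithm~\ref{alg:kplex-search}: if $lb\mbox{-}plex = ub\mbox{-}plex$, the sandwich $lb\mbox{-}plex \leq \overline{s} \leq ub\mbox{-}plex$ forces $\overline{s} = lb\mbox{-}plex$, so $s_i = lb\mbox{-}plex = \overline{s}$; otherwise, by the specification of \texttt{Plex-BRB}, either $\overline{s} \geq pseudo\mbox{-}lb$ and $|S| = \overline{s}$, or $\overline{s} < pseudo\mbox{-}lb$ and $|S| = 0$, so $s_i = \max\{pseudo\mbox{-}lb, |S|\} \geq \overline{s}$ in both subcases.

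The main obstacle is the subclaim $\overline{s} \leq s_{i-1}$, since $k$-plexes are not in general $\gamma$-quasi-cliques; the argument works only because the specific coupling $k = \texttt{get-k}(s_{i-1})$ allows us to lift any too-large $k$-plex into a $\gamma$-QC violating the maximality of $s^*$. Once this monotonicity is secured, the remaining case analysis on \texttt{Plex-Search} is routine.
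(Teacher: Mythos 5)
Your argument is correct, and its core is the same as the paper's: the inequality $s_i \geq \overline{s}$ is obtained by exactly the paper's case analysis on how $\overline{s}$ compares with $pseudo\mbox{-}lb$ in Algorithm~\ref{alg:kplex-search}, and $\overline{s} \geq s^*$ comes from the fact that the optimum $\gamma$-quasi-clique $g^*$ is itself a $k$-plex for $k = \texttt{get-k}(s_{i-1}) \geq \texttt{get-k}(s^*)$ — the paper gets this by citing Lemmas~\ref{lem:seq-s-least} and~\ref{cor:iteration with ub}, whereas you re-derive it inline via the identity $s - \texttt{get-k}(s) = \lceil \gamma(s-1)\rceil$; both are the same underlying argument. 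The one genuine addition in your write-up is the subclaim $\overline{s} \leq s_{i-1}$, proved by lifting any too-large $k$-plex into a $\gamma$-QC exceeding $s^*$. The paper does not prove this inside the lemma: it is asserted as part of the specification of \texttt{Plex-Search} (``no $k$-plex larger than $ub\mbox{-}plex$ exists''), and the paper's proof only treats the branch where \texttt{Plex-BRB} is actually invoked, leaving the early-return branch $lb\mbox{-}plex = ub\mbox{-}plex$ of Algorithm~\ref{alg:kplex-search} to that asserted specification (it is handled separately later, in the correctness proof of Algorithm~\ref{alg:improved-iterative-search}). Your version is therefore slightly more self-contained: it justifies why $ub\mbox{-}plex = s_{i-1}$ is a valid upper bound under the stated hypotheses and covers the early-return case explicitly, at the cost of an extra argument that the lemma's conclusion does not strictly require for the \texttt{Plex-BRB} branch.
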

\begin{proof}
    In Algorithm~\ref{alg:kplex-search}, we invoke the branch-and-bound algorithm \texttt{Plex-BRB} with a pseudo lower bound of $pseudo\mbox{-}lb$ in Line 4, where $pseudo\mbox{-}lb = \lfloor (lb\mbox{-}plex + ub\mbox{-}plex)/2 \rfloor$ is not the true lower bound. In other words, with $pseudo\mbox{-}lb$, the size of the returned vertex set $S$ may not be larger than our pseudo lower bound $pseudo\mbox{-}lb$. We have two possible cases as follows.  
    \begin{enumerate}[leftmargin=*]
        \item $\overline{s}>pseudo\mbox{-}lb$. In this case, since the pruning in \texttt{Plex-BRB} using $pseudo\mbox{-}lb$ does not affect the generation of the correct solution of maximum $k$-plex, the returned vertex set $S$ from \texttt{Plex-BRB} corresponds to the maximum $k$-plex. Thus, we have $|S| = \overline{s}$. Moreover, as $s_i = \max\{pseudo\mbox{-}lb, |S|\}$, we know that $s_i \geq \overline{s}$.
        \item $\overline{s} \leq pseudo\mbox{-}lb$. In this case, in Line 4 of Algorithm~\ref{alg:improved-iterative-search}, we have $s_i = \max\{pseudo\mbox{-}lb, |S|\}$, which directly implies that $s_i \geq \overline{s}$.
    \end{enumerate}
    
    In either case mentioned above, we have $s_i \geq \overline{s}$. According to Lemmas \ref{lem:seq-s-least} and \ref{cor:iteration with ub} and $ub\mbox{-}plex = s_{i-1} \geq s^*$, we know that $\overline{s} \geq s^*$. The proof is thus complete.
\end{proof}

Utilizing the connection, we verify the correctness of Algorithm~\ref{alg:improved-iterative-search} that uses the technique of pseudo LB in the following.
\begin{lemma}\label{lem:improved-correctness}
     Algorithm~\ref{alg:improved-iterative-search} correctly finds the largest $\gamma$-quasi-clique.
\end{lemma}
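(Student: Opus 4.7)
The plan is to prove Lemma~\ref{lem:improved-correctness} in two stages: first establish an invariant that every $s_i$ produced by Algorithm~\ref{alg:improved-iterative-search} upper bounds $s^*$, and then show that when the termination condition is met, the returned value is actually achievable by some $\gamma$-quasi-clique, so equality must hold.

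For the invariant, I would proceed by induction on $i$. The base case is $s_0 = ub \geq s^*$, which follows from the correctness of \texttt{Get-Bounds} and Lemma~\ref{lem:ub}. For the inductive step, assume $s_{i-1} \geq s^*$. At iteration $i$, Algorithm~\ref{alg:improved-iterative-search} calls \texttt{Plex-Search}$(G, k_i, s_{i-1})$ with $k_i = \texttt{get-k}(s_{i-1})$, exactly matching the hypothesis of Lemma~\ref{lem:connection}. Applying that lemma yields $s_i \geq \overline{s} \geq s^*$, completing the induction.

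Next, I would analyze the termination condition in Line 4, namely $k_i = \texttt{get-k}(s_i)$ \emph{and} $pseudo\mbox{-}size \geq pseudo\mbox{-}lb$. The second conjunct is the crucial part that distinguishes the improved framework from the basic one: it guarantees that the pseudo lower bound did not prune away a valid witness. Specifically, when $pseudo\mbox{-}size \geq pseudo\mbox{-}lb$, the set $S$ returned by \texttt{Plex-BRB} is nonempty and constitutes the true maximum $k_i$-plex in $G$, so $s_i = |S| = \overline{s}$ coincides with $\texttt{solve-plex}(k_i)$. Combined with $k_i = \texttt{get-k}(s_i)$, the same argument as in Lemma~\ref{lem:correct-ans} applies: the witnessing $k_i$-plex of size $s_i$ satisfies $d_{g}(v) \geq s_i - k_i = s_i - 1 - \lfloor(1-\gamma)(s_i-1)\rfloor \geq \gamma(s_i-1)$ for every vertex $v$, so it is a $\gamma$-quasi-clique of size $s_i$. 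Hence $s_i \leq s^*$, and together with the invariant $s_i \geq s^*$ this forces $s_i = s^*$.

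Finally I would argue termination so that the \textbf{while} loop must reach the return statement. This is the part I expect to require a little care, because the pseudo LB can in principle cause $s_i$ to remain at $pseudo\mbox{-}lb$ rather than strictly decrease like the sequence of Lemma~\ref{lem:seq-s}. The plan is to split on the two cases used in the proof of Lemma~\ref{lem:connection}. When $pseudo\mbox{-}size \geq pseudo\mbox{-}lb$, the algorithm has $s_i = \texttt{solve-plex}(k_i)$, which is exactly the quantity analyzed in the basic framework; when $pseudo\mbox{-}size < pseudo\mbox{-}lb$, then $s_i = pseudo\mbox{-}lb = \lfloor(lb\mbox{-}plex + ub\mbox{-}plex)/2\rfloor \leq \lfloor(ub\mbox{-}plex-1 + ub\mbox{-}plex)/2\rfloor < ub\mbox{-}plex = s_{i-1}$, so $s_i$ still strictly decreases. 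Either way, the sequence $\{s_i\}$ strictly decreases until it stabilizes, and stabilization triggers $k_i = \texttt{get-k}(s_i)$. Since the integer sequence is bounded below by $s^* \geq 1$, termination occurs in at most $O(n)$ iterations, and the combined argument above then yields $s_i = s^*$ at termination, completing the proof.
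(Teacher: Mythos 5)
Your proposal is correct and follows essentially the same route as the paper's proof in Appendix~\ref{sec:proof-improved}: it establishes the invariant $s_i \geq s^*$ via Lemma~\ref{lem:connection} (with the base case from \texttt{Get-Bounds}), certifies the returned value by the same $\gamma$-QC witness argument as in Lemma~\ref{lem:correct-ans}, and obtains termination from a strictly decreasing integer sequence bounded below by $s^*$. The only difference is organizational -- you verify correctness directly at the termination condition (using the conjunct $pseudo\mbox{-}size \geq pseudo\mbox{-}lb$ to guarantee a genuine $k_i$-plex witness, which also covers the early-return case $lb\mbox{-}plex = ub\mbox{-}plex$ of Algorithm~\ref{alg:kplex-search}) and argue termination separately, whereas the paper reasons by cases on $lb\mbox{-}plex$ versus $ub\mbox{-}plex$; both arguments rest on the same lemmas and ideas.
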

\begin{proof}
    According to Lemma~\ref{lem:connection}, we can guarantee that $s_i \geq s^*$ for each iteration $i$ in Lines 2-6 of Algorithm~\ref{alg:improved-iterative-search}. Thus, we can view Algorithm~\ref{alg:improved-iterative-search} as a procedure that continuously seeks upper bounds for the maximum $\gamma$-quasi-clique across all iterations. Let \( s_p \) denote the last element of the sequence \( \{s_0, s_1, \ldots, s_p\} \) generated by Algorithm~\ref{alg:improved-iterative-search}. Next, we prove that $s_p$ is equal to $s^*$.

    Consider the $i$-th iteration in Lines 2-6 of Algorithm~\ref{alg:improved-iterative-search}, where \texttt{Plex-Search}$(G,k,ub\mbox{-}plex)$ is called in Line 3. Note that $ub\mbox{-}plex = s_{i-1}$. We focus on this \texttt{Plex-Search} and have two cases.  
    \begin{enumerate}[leftmargin=*]
        \item $lb\mbox{-}plex = ub\mbox{-}plex$. Since $ub\mbox{-}plex = s_{i-1}$, it follows that $ub\mbox{-}plex = s_{i-1} \geq s^*$. We also know that there exists a $k$-plex of size $lb\mbox{-}plex$ computed in Line 1 of Algorithm~\ref{alg:kplex-search}. Thus, we have a $k$-plex of size $s_{i-1}$, where $k = \texttt{get-k}(s_{i-1})$. Similar to Lemma~\ref{lem:correct-ans}, it is easy to conclude that the returned result in Line 5 of Algorithm~\ref{alg:improved-iterative-search} is exactly $s^*$. Thus, $s^* = s_{i-1} = s_i$.
        \item  $lb\mbox{-}plex < ub\mbox{-}plex$. It is easy to see that $pseudo\mbox{-}lb < ub\mbox{-}plex$. If $s_i = pseudo\mbox{-}lb$ in Line 3 of Algorithm~\ref{alg:improved-iterative-search}, it guarantees that $s_i < s_{i-1}$. Otherwise, i.e., $s_i = pseudo\mbox{-}size$, we use $\overline{s}_i$ to denote the size of the largest $k$-plex in the graph $G$,  according to the proof of Lemma~\ref{lem:connection}, $s_i = \overline{s_i}$, which still satisfies $s_i = \overline{s_i} < s_{i-1}$. Thus, in any case, we have $s_i < s_{i-1}$.
    \end{enumerate}
    Since $\{s_0,s_1,\ldots,s_p\}$ is an integer sequence with each $s_i \geq s^*$, the case in 2) is always finite. In other words, the case in 1) will definitely occur. The proof is complete.
\end{proof}

\section{Omitted Examples}\label{sec:example}
\noindent \underline{\textbf{Example of the basic iterative framework.}}
\begin{figure}[t]
    \centering
    \subfigure[An example graph.]{
        \includegraphics[height=0.16\textwidth,width=0.256\textwidth, trim=6cm 7.5cm 15.2cm 4cm, clip]{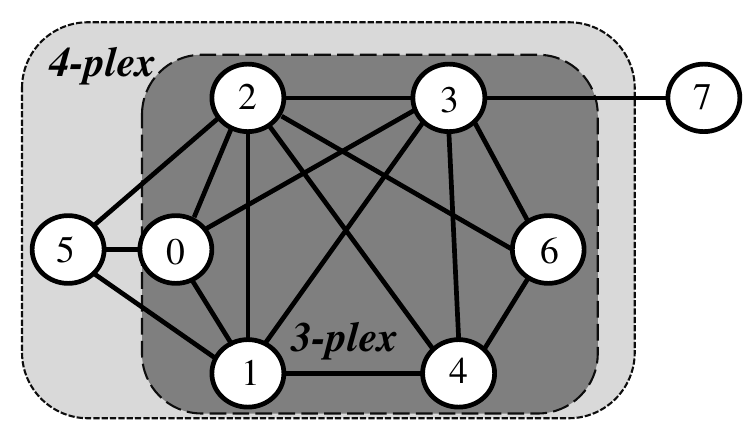} 
        \label{Fig:basic_framework_example}
    }
    \subfigure[Illustration of the example.]{
    \includegraphics[height=0.16\textwidth,width=0.16\textwidth, trim=6cm 22.5cm 9cm 1cm, clip]{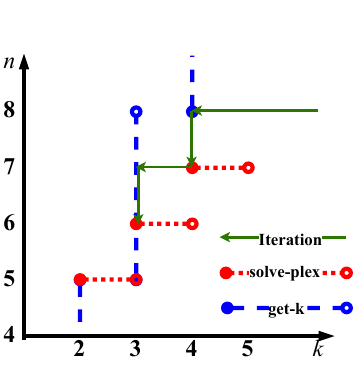} 
        \label{Fig:basic_framework_example2}
    }
    \vspace{-0.2in}
    \caption{An example of Algorithm~\ref{alg:basic-framework} with $\gamma = 0.55$.}\label{Fig:basic_framework}
\end{figure}
We present an example of Algorithm~\ref{alg:basic-framework} and illustrate its iterative process in Figures~\ref{Fig:basic_framework_example} and~\ref{Fig:basic_framework_example2}, respectively. 
In this example with $\gamma = 0.55$, the graph has 8 vertices, which means $s_0 = 8$. By applying $k_1 = \texttt{get-k}(s_0) = \lfloor (1 - 0.55) \cdot (8 - 1)\rfloor + 1 = 4$, the function \texttt{solve-plex} identifies the largest 4-plex, highlighted by the light-colored box in Figure~\ref{Fig:basic_framework_example}, which results in $s_1 = 7$. Subsequently, $k_2 = \texttt{get-k}(s_1) = \lfloor (1 - 0.55) \cdot (7 - 1)\rfloor + 1=3$, and solving \texttt{solve-plex} produces the largest 3-plex, indicated by the dark-colored box in Figure~\ref{Fig:basic_framework_example}, with $s_2 = 6$ vertices. At this point, the stopping condition $k_2 = \texttt{get-k}(s_2)$ is satisfied, and the iteration terminates. The maximum 3-plex shown in Figure~\ref{Fig:basic_framework_example} corresponds to the desired maximum $0.55$-quasi-clique.  
In Figure~\ref{Fig:basic_framework_example2}, the horizontal axis represents \(k\), while the vertical axis represents \(n\). Moreover, (1) the sparse light blue dashed line shows the relationship between \(n\) and the corresponding \(k\) obtained from \texttt{get\mbox{-}k}; (2) the dense red dashed line shows the relationship between \(k\) and the resulting \(n\) obtained from \texttt{solve\mbox{-}plex}; (3) the green solid line represents the overall iterative process of Algorithm~\ref{alg:basic-framework}, demonstrating how \(n\) evolves from the initial graph of size 8 to the final result of the maximum $\gamma$-quasi-clique with size 6. Specifically, we start with \(s_0 = 8\). After the first iteration, applying \texttt{get-k}(\(s_0\)) transitions to the state to \((4, 8)\) in Figure~\ref{Fig:basic_framework_example2} on the corresponding axes (i.e., \(k_1 = 4\), \(s_0 = 8\)). Then, solving \(s_1 = \texttt{solve-plex}(k_1)\) transitions the state to \((4, 7)\), completing one full iteration. Repeating this process leads to termination at \((3, 6)\), where the corresponding vertical coordinate \(6\) represents the optimum solution \(s^*=6\).

\noindent \underline{\textbf{Example of the pseudo LB technique.}}
\begin{figure}[t]
    \centering
    \subfigure[An example graph.]{
        \includegraphics[height=0.16\textwidth,width=0.256\textwidth, trim=6cm 7.5cm 15.2cm 4cm, clip]{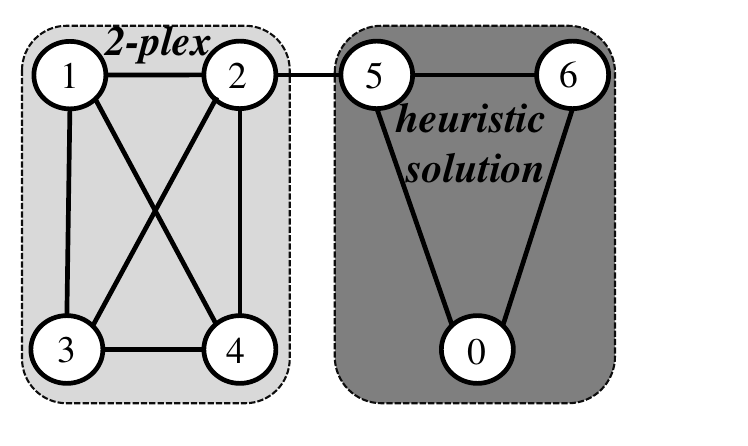} 
        \label{fig:improved_framework_example}
    }
    \subfigure[Illustration of the example.]{
    \includegraphics[height=0.16\textwidth,width=0.16\textwidth, trim=6cm 22.5cm 9cm 1cm, clip]{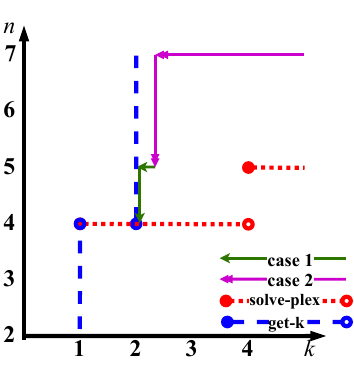} 
        \label{fig:improved_framework_example2}
    }
    \vspace{-0.2in}
    \caption{An example of Algorithm~\ref{alg:improved-iterative-search} with $\gamma = 0.75$.}
    \label{Fig:improved_iterative_example}
\end{figure}
Figure~\ref{fig:improved_framework_example} shows a graph with 7 vertices and Figure~\ref{fig:improved_framework_example2} presents the corresponding iterative process.
For clarity, line segments near $k=2$ in Figure~\ref{fig:improved_framework_example2} are slightly shifted along the positive $k$-axis; this is solely for visualization and does not indicate non-integer $k$ values. Assume that the heuristic solution obtained in each iteration corresponds to the subgraph in the dark-colored box, i.e., $G[\{0,5,6\}]$, in Figure~\ref{fig:improved_framework_example}, where \(lb\mbox{-}plex = 3\) according to Line 1 of Algorithm~\ref{alg:kplex-search}.
As shown in Figure~\ref{fig:improved_framework_example2}, using the pseudo LB technique to solve the problem with \(\gamma = 0.75\) in Figure~\ref{fig:improved_framework_example} requires two iterations. The first iteration, represented by the double-arrowed purple solid line in the figure, corresponds to the second case discussed in the proof of Lemma~\ref{lem:connection}. At this stage, \(pseudo\mbox{-}lb = \lfloor(3 + 7)/2\rfloor = 5 > \overline{s}_1 = 4\). Consequently, the branch-and-bound search in Line 4 of Algorithm~\ref{alg:kplex-search} fails to find a solution, resulting in \(S = \emptyset\). Thus, \(s_1 = \max\{5, 0\} = 5\), which initiates the second iteration.
In the second iteration, \(pseudo\mbox{-}lb = \lfloor(5 + 3)/2\rfloor = 4 \leq \overline{s}_2 = 4\), which corresponds to the first case discussed in the proof of Lemma~\ref{lem:connection}. Since the pruning during the branch-and-bound search uses \(pseudo\mbox{-}lb = 3\), it does not affect the computation of the solution \(S\). Thus, \(s_2 = \max\{3, 4\} = 4\), satisfying the condition in Line 4 of Algorithm 3. The final solution, i.e., $G[\{1,2,3,4\}]$, is in Figure~\ref{fig:improved_framework_example} within the light-colored box.

\section{Additional Experiments}\label{sec:more-exp}
\noindent \underline{\textbf{Number of solved instances.}} 
Figures~\ref{Fig:dimacs10_various_gamma_appendix} and~\ref{Fig:realworld_various_gamma_appendix} illustrate the trends in the number of solved instances over time for \texttt{IterQC}, \texttt{FastQC}, and \texttt{DDA} on the 10th DIMACS and real-world datasets, under $\gamma$ values of 0.55, 0.6, 0.7, 0.8, and 0.9. From these results, it is easy to see that under the same time constraint, for any given value of $\gamma$ or dataset, \texttt{IterQC} consistently solves more instances compared to the baseline algorithms. Remarkably, regardless of the dataset or $\gamma$, \texttt{IterQC} solves more instances within 3 seconds than the larger number achieved by either baseline algorithm within 3 hours. For instance, at $\gamma = 0.6$, \texttt{IterQC} solves 53 instances in only 3 seconds, while \texttt{FastQC} and \texttt{DDA} solve only 41 and 58 instances, respectively, even after 3 hours of computation. Across all tested values of $\gamma$, \texttt{IterQC} achieves improved practical performance. On the 10th DIMACS dataset, it solves at least 79 out of 84 instances, while on the real-world dataset, it solves at least 132 out of 139 instances.

\begin{figure*}[t]
    \centering
    \subfigure[$\gamma = 0.55$]{
        \includegraphics[width=0.18\textwidth]{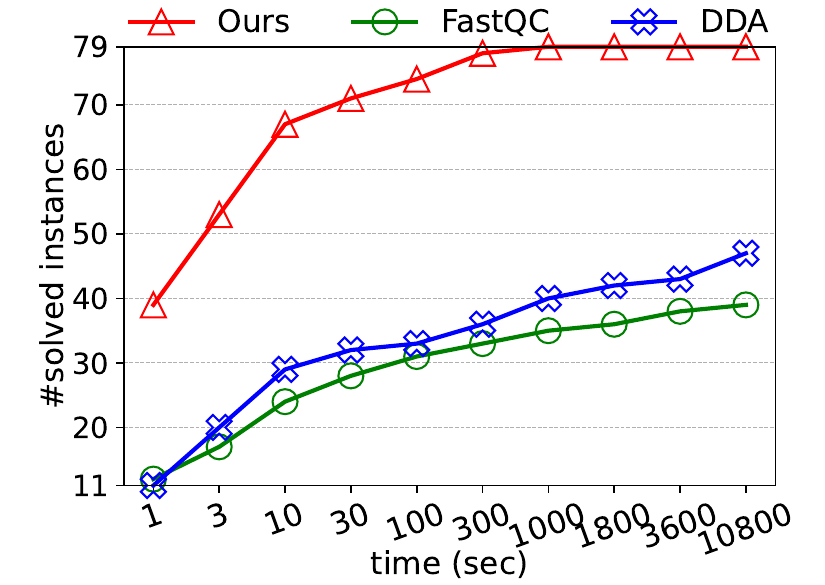}
        \label{fig:dimacs055}
    }
    \subfigure[$\gamma = 0.6$]{
        \includegraphics[width=0.18\textwidth]{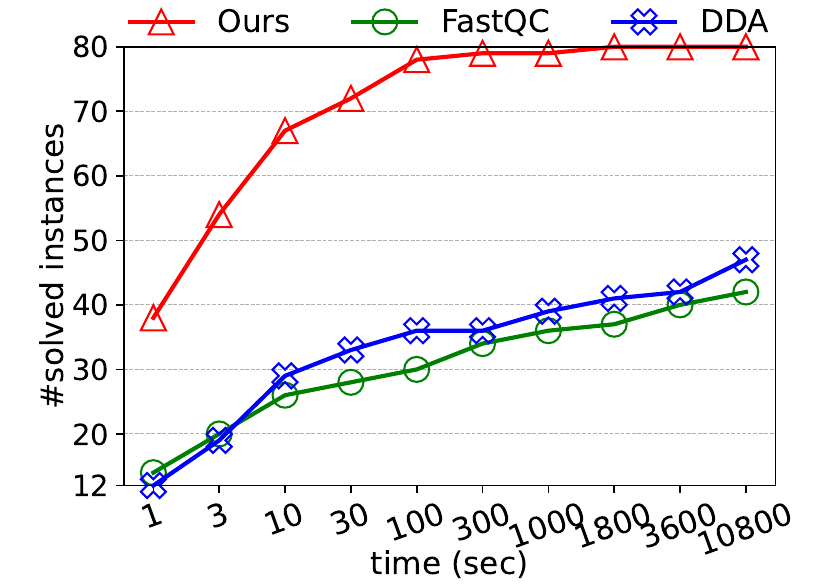}
        \label{fig:dimacs06}
    }
    \subfigure[$\gamma = 0.7$]{
        \includegraphics[width=0.18\textwidth]{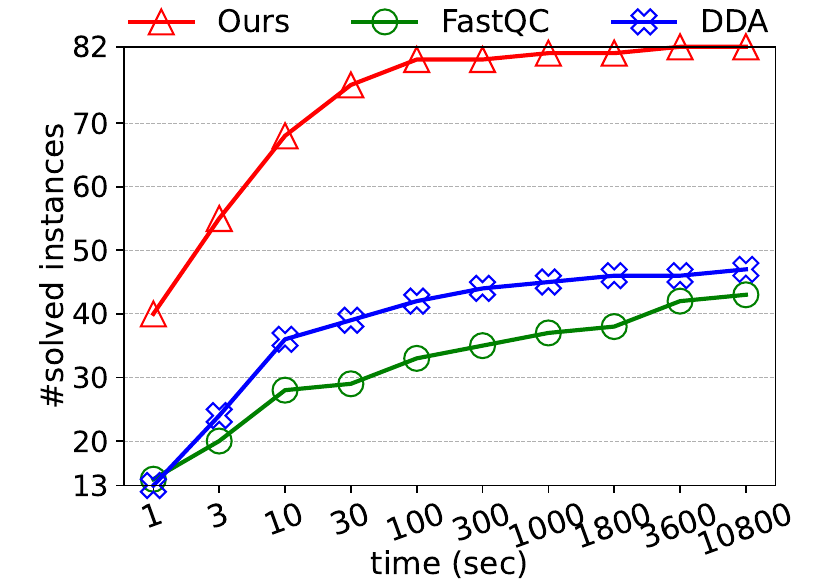}
        \label{fig:dimacs07}
    }
    \subfigure[$\gamma = 0.8$]{
        \includegraphics[width=0.18\textwidth]{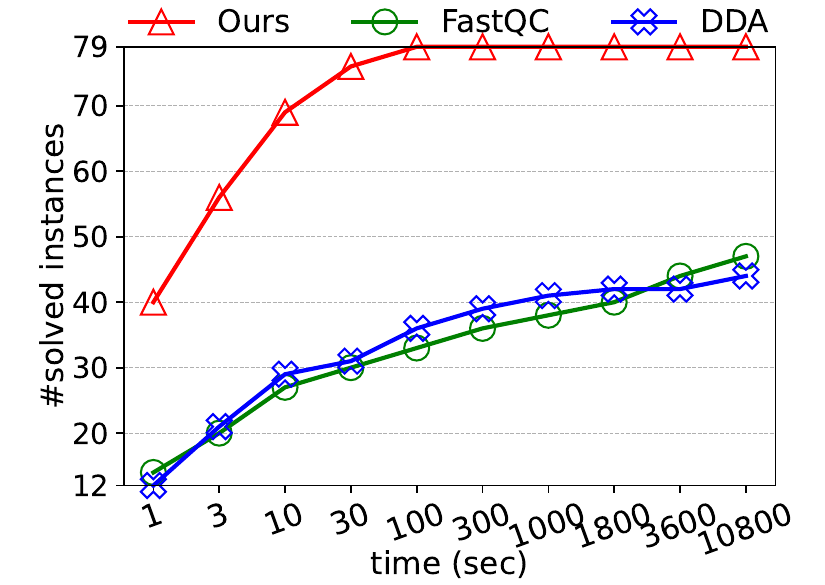}
        \label{fig:dimacs08}
    }
    \subfigure[$\gamma = 0.9$]{
        \includegraphics[width=0.18\textwidth]{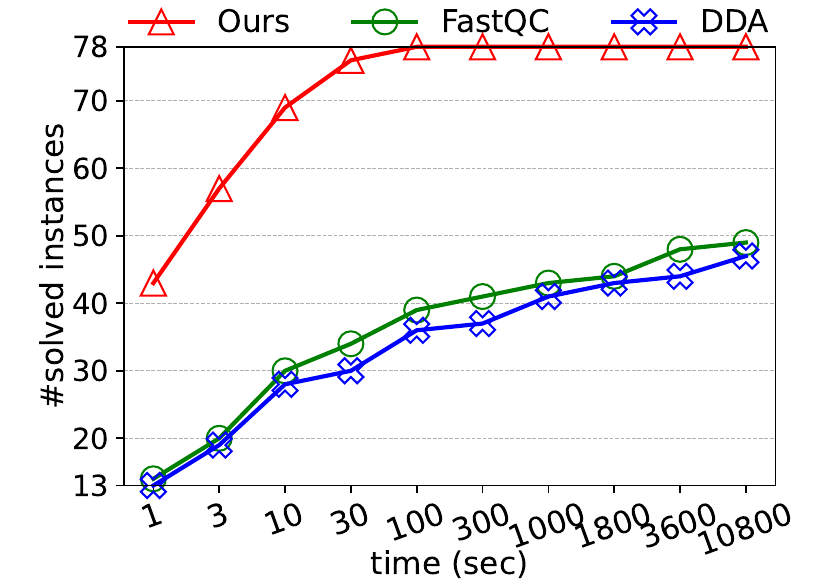}
        \label{fig:dimacs09}
    }
    \vspace{-0.2in}
    \caption{Number of solved instances on 10th DIMACS graphs.}
    \label{Fig:dimacs10_various_gamma_appendix}
\end{figure*}

\begin{figure*}[t]
    \centering
    \subfigure[$\gamma = 0.55$]{
        \includegraphics[width=0.18\textwidth]{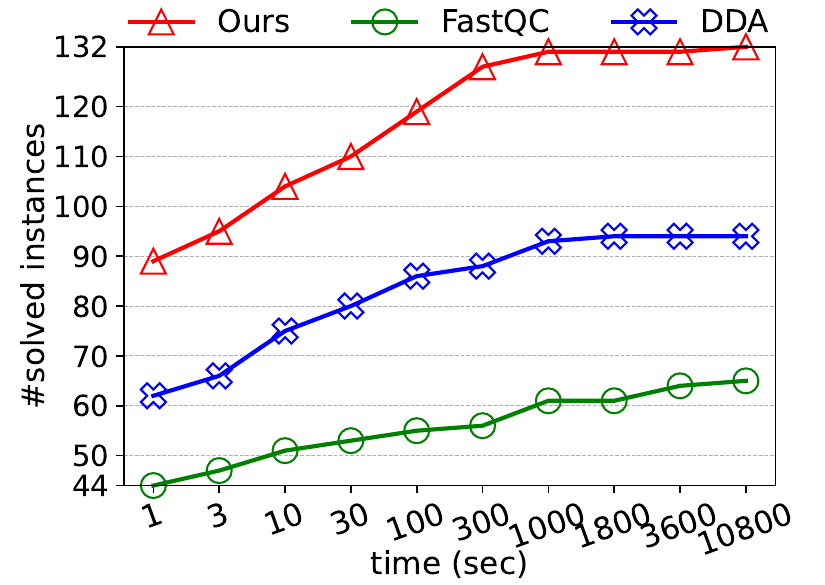}
        \label{fig:realworld055}
    }
    \subfigure[$\gamma = 0.6$]{
        \includegraphics[width=0.18\textwidth]{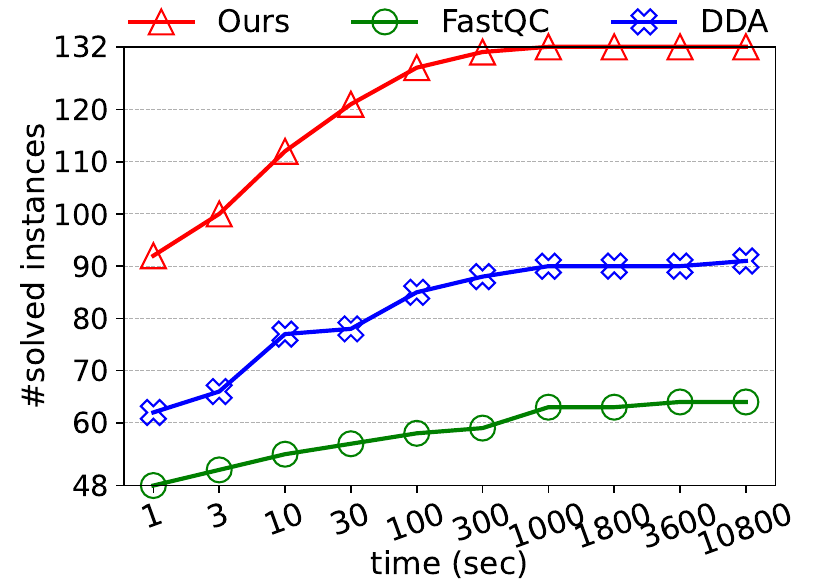}
        \label{fig:realworld06}
    }
    \subfigure[$\gamma = 0.7$]{
        \includegraphics[width=0.18\textwidth]{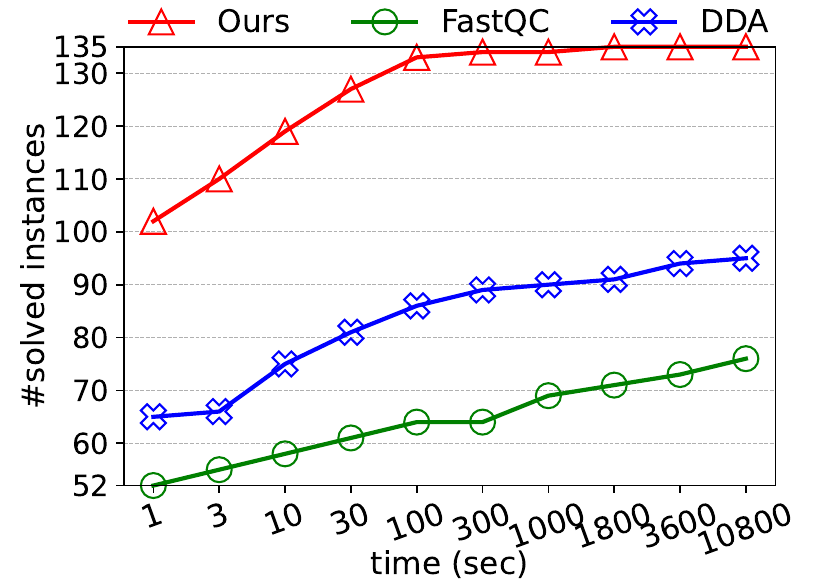}
        \label{fig:realworld07}
    }
    \subfigure[$\gamma = 0.8$]{
        \includegraphics[width=0.18\textwidth]{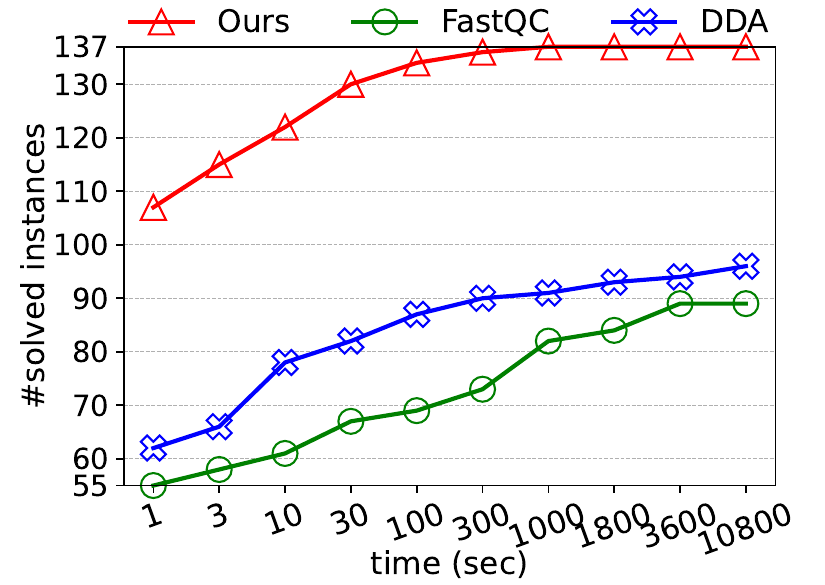}
        \label{fig:realworld05}
    }
    \subfigure[$\gamma = 0.9$]{
        \includegraphics[width=0.18\textwidth]{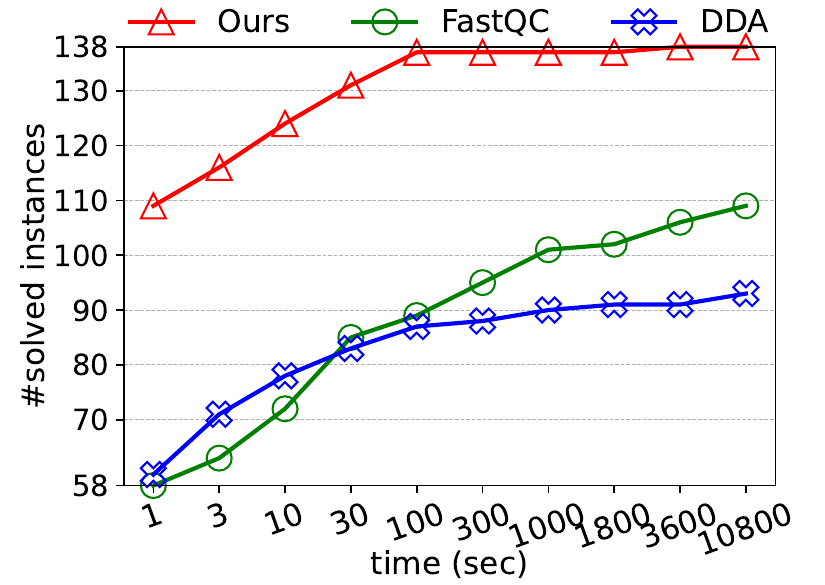}
        \label{fig:realworld09}
    }
    \vspace{-0.2in}
    \caption{Number of solved instances on real-world graphs.}
    \label{Fig:realworld_various_gamma_appendix}
\end{figure*}

\noindent \underline {\textbf{Performance on representative instances.}} Tables~\ref{table:representitive_65} and~\ref{table:representitive_85} summarize the performance of \texttt{IterQC} and the two baseline algorithms across 30 representative instances under $\gamma$ values of 0.65 and 0.85. The results demonstrate that \texttt{IterQC} consistently outperforms the baseline algorithms in terms of solving capability within the given time limit of 3 hours. In particular, \texttt{IterQC} successfully solves all instances except for G29 under $\gamma = 0.65$, achieving nearly complete coverage. In contrast, the two baseline algorithms, \texttt{FastQC} and \texttt{DDA}, exhibit a significant number of timeouts under all values of $\gamma$. For instance, at $\gamma = 0.65$, \texttt{FastQC} and \texttt{DDA} fail to solve 25 and 18 instances, respectively, out of the 30 cases. These results highlight the robustness and efficiency of \texttt{IterQC} in handling challenging instances, further validating its scalability and superiority over the baseline methods.
On the other hand, we can observe that in most cases, as \( \gamma \) increases, the runtime of \texttt{FastQC} gradually decreases, as seen in instances like G4 and G5, or the difference remains small, as in G13, G14, and G17. However, for the \texttt{IterQC} algorithm, the trend may not always be monotonic, as shown in instances like G15 and G16. This is because \texttt{IterQC} starts iterating from an upper bound, and its performance is influenced by both the core number and \( \gamma \) in a heuristic manner. If the initial bound is already close to the final solution, the algorithm can converge in just a few iterations. As a result, \( \gamma \) not only affects the complexity of the search space but also impacts the number of iterations in \texttt{IterQC}, making its runtime not necessarily monotonic with respect to \( \gamma \).

\begin{table}[t]
        \centering
        \captionsetup{font=small}
        \caption{Runtime performance (in seconds) of \texttt{IterQC}, \texttt{FastQC}, and \texttt{DDA} on 30 instances with $\gamma = 0.65$.}
        \label{table:representitive_65}
        \vspace{-0.15in}
        \scalebox{0.8}{
        \begin{tabular}{l|l l l|l|l l l}
        \hline
         \textbf{ID} & \texttt{IterQC} & \texttt{FastQC} & \texttt{DDA} & \textbf{ID} & \texttt{IterQC} & \texttt{FastQC} & \texttt{DDA} \\ \hline
        G1 & \textbf{0.008} & OOT & 0.25 & G16 & \textbf{6.49} & OOT & OOT \\ 
        G2 & 6.19 & OOT & \textbf{1.04} & G17 & \textbf{0.04} & 1041.55 & 3.22\\ 
        G3 & \textbf{0.55} & OOT & OOT & G18 & OOT & OOT & OOT \\ 
        G4 & \textbf{0.03} & 2211.78 & 101.39 & G19 & \textbf{13.80} & OOT & 26.85 \\ 
        G5 & \textbf{6.07} & OOT & OOT & G20 & \textbf{2.42} & OOT & 13.39 \\ 
        G6 & OOT & OOT & OOT & G21 & \textbf{6.47} & OOT & 21.64 \\ 
        G7 & \textbf{0.33} & OOT & OOT & G22 & \textbf{9.70} & OOT & 6845.3 \\ 
        G8 & \textbf{5.33} & OOT & OOT & G23 & \textbf{0.30} & OOT & 26.90 \\ 
        G9 & \textbf{0.01} & 3.89 & 1.92 & G24 & \textbf{4.82} & OOT & OOT \\ 
        G10 & \textbf{4.08} & OOT & OOT & G25 & \textbf{1.80} & OOT & 91.48 \\ 
        G11 & \textbf{135.1} & OOT & OOT & G26 & \textbf{0.49} & OOT & OOT \\ 
        G12 & \textbf{3.38} & OOT & OOT & G27 & \textbf{26.13} & OOT & OOT \\ 
        G13 & \textbf{5.87} & 33.99 & OOT & G28 & \textbf{19.65} & OOT & 40.70 \\ 
        G14 & \textbf{5.01} & 98.83 & OOT & G29 & \textbf{36.31} & OOT & OOT \\ 
        G15 & \textbf{3.06} & OOT & OOT & G30 & \textbf{23.28} & OOT & OOT \\ \hline
        \end{tabular}}
\end{table}
\begin{table}[t]
    \centering
    \captionsetup{font=small}
    \caption{Runtime performance (in seconds) of \texttt{IterQC}, \texttt{FastQC}, and \texttt{DDA} on 30 instances with $\gamma = 0.85$.}
    \label{table:representitive_85}
    \vspace{-0.15in}
    \scalebox{0.8}{
    \begin{tabular}{l|l l l|l|l l l}
        \hline
        ID & \texttt{IterQC} & \texttt{FastQC} & \texttt{DDA} & ID & \texttt{IterQC} & \texttt{FastQC} & \texttt{DDA} \\ \hline
        G1 & \textbf{0.08} & OOT & 3.43 & G16 & \textbf{6.49} & OOT & OOT \\
        G2 & 5.02 & OOT & \textbf{0.88} & G17 & \textbf{0.015} & 1115.24 & 517.41 \\ 
        G3 & \textbf{0.30} & OOT & OOT & G18 & \textbf{91.45} & OOT & OOT \\ 
        G4 & \textbf{0.02} & 1.18 & 21.53 & G19 & \textbf{12.32} & OOT & 16.63 \\ 
        G5 & \textbf{0.50} & 85.27 & OOT & G20 & \textbf{2.65} & OOT & OOT \\ 
        G6 & \textbf{5.01}& 5.05 & OOT & G21 & \textbf{13.03} & OOT & 21.88 \\ 
        G7 & \textbf{0.21} & 203.62 & OOT & G22 & \textbf{18.28} & OOT & OOT \\ 
        G8 & \textbf{5.13} & OOT & OOT & G23 & \textbf{0.24} & OOT & 25.93 \\ 
        G9 & \textbf{0.13} & 3.90 & 0.48 & G24 & \textbf{8.84} & OOT & OOT \\ 
        G10 & \textbf{4.14} & OOT & OOT & G25 & \textbf{1.90} & OOT & 46.32 \\ 
        G11 & \textbf{7.95} & OOT & OOT & G26 & \textbf{0.38} & OOT & OOT \\ 
        G12 & \textbf{0.45} & OOT & OOT & G27 & \textbf{29.11} & OOT & OOT \\ 
        G13 & \textbf{5.90} & 34.95 & 2490.8 & G28 & \textbf{17.61} & OOT & 39.78 \\ 
        G14 & \textbf{3.48} & 102.83 & OOT & G29 & \textbf{0.99} & OOT & 93.58 \\ 
        G15 & \textbf{0.19} & OOT & OOT & G30 & \textbf{21.31} & OOT & OOT \\ \hline
    \end{tabular}}
\end{table}

\noindent \underline {\textbf{Ablation studies.}} Tables~\ref{table:representitive_65_ablation_appendix} to~\ref{table:representitive_85_ablation_appendix} present the results of the ablation studies conducted on 30 representative instances, demonstrating the impact of preprocessing and the pseudo lower bound (pseudo LB) techniques on practical performance. The results clearly indicate significant improvements across all values of $\gamma$. For the preprocessing technique, the optimization effect is clear in nearly every case. For example, at $\gamma = 0.65$, instance G29 achieves a speedup of approximately $ 20.79\times$ compared to the setting without the preprocessing technique. Regarding the pseudo LB technique, substantial benefits are observed in most instances, particularly for smaller values of $\gamma$. At $\gamma = 0.65$, 23 out of 30 instances show accelerated running time. Specifically, for G11 at $\gamma = 0.65$ and G15 at $\gamma = 0.85$, the introduction of the pseudo LB technique transforms previously unsolvable instances within the 3-hour limit into solvable ones. For G15 at $\gamma = 0.85$, the running time is reduced to 57.88 seconds, corresponding to a speedup of $186\times$. These findings highlight the critical contributions of both preprocessing and pseudo LB techniques in improving efficiency of the proposed method.
\begin{table}[t]
\captionsetup{font=small}
    \caption{Runtime performance (in seconds) of \texttt{IterQC}, \texttt{IterQC-PP}, and \texttt{IterQC-PLB} on 30 instances with $\gamma = 0.65$.}
    \label{table:representitive_65_ablation_appendix}
    \vspace{-0.15in}
    \scalebox{0.8}{
    \begin{tabular}{l|l l l|l|l l l}
    \hline
        ID & \texttt{IterQC} & \texttt{-PP} & \texttt{-PLB} & ID & \texttt{IterQC} & \texttt{-PP} & \texttt{-PLB} \\ \hline
        G1 & 0.01& 0.1& \textbf{0.01} & G16 & \textbf{5.87} & 102.0& 6.52\\
        G2 & 0.01& 0.07& \textbf{0.01} & G17 & 0.3& 4.95& \textbf{0.28} \\
        G3 & \textbf{0.55} & 2.11& 0.55& G18 & OOT & OOT & OOT \\
        G4 & \textbf{0.04} & 0.18& 0.05& G19 & \textbf{4.08} & 29.05& 4.33\\
        G5 & \textbf{6.07} & 10.14& 14.32& G20 & \textbf{1.8} & 32.42& 2.71\\
        G6 & \textbf{6.19} & 7.53& 6.47& G21 & \textbf{2.42} & 73.74& 2.95\\
        G7 & 0.33& 1.57& \textbf{0.23} & G22 & \textbf{4.82} & 77.83& 6.55\\
        G8 & \textbf{3.06} & 4.07& 12.26& G23 & \textbf{5.33} & 90.88& 5.93\\
        G9 & \textbf{0.02} & 0.23& 0.03& G24 & \textbf{6.47} & 79.45& 7.78\\
        G10 & 36.31& \textbf{35.66} & 180.13& G25 & 19.65& 144.75& \textbf{19.2} \\
        G11 & \textbf{135.12} & 501.39& OOT & G26 & \textbf{26.13} & 159.5& 29.94\\
        G12 & \textbf{3.38} & 3.64& 7.35& G27 & \textbf{9.7} & 116.42& 11.26\\
        G13 & 0.49& 1.74& \textbf{0.46} & G28 & \textbf{6.49} & 134.9& 6.57\\
        G14 & \textbf{5.01} & 9.74& 5.87& G29 & \textbf{13.8} & 368.1& 13.91\\
        G15 & OOT & OOT & OOT & G30 & \textbf{23.28} & 460.12& 27.91\\ \hline
    \end{tabular}}
\end{table}

\begin{table}[t]
    \centering
    \captionsetup{font=small}
    \caption{Runtime performance (in seconds) of \texttt{IterQC}, \texttt{IterQC-PP}, and \texttt{IterQC-PLB} on 30 instances with $\gamma = 0.85$.}
    \label{table:representitive_85_ablation_appendix}
    \vspace{-0.15in}
    \scalebox{0.8}{
    \begin{tabular}{l|l l l|l|l l l}
    \hline
        ID & \texttt{IterQC} &\texttt{-PP} & \texttt{-PLB} & ID & \texttt{IterQC} & \texttt{-PP} & \texttt{-PLB} \\ \hline
        G1 & \textbf{0.08} & 0.14& 0.1& G16 & \textbf{5.9} & 17.33& 6.34\\
        G2 & 0.13& 0.46& \textbf{0.13} & G17 & \textbf{0.24} & 1.67& 0.26\\
        G3 & 0.3& 0.86& \textbf{0.28} & G18 & 91.45& 5334.55& \textbf{74.72} \\
        G4 & \textbf{0.01} & 0.08& 0.02& G19 & \textbf{4.14} & 12.31& 5.96\\
        G5 & 0.5& 3.15& \textbf{0.38} & G20 & \textbf{1.9} & 11.16& 2.05\\
        G6 & 5.02& 7.39& \textbf{2.83} & G21 & \textbf{2.65} & 25.22& 3.31\\
        G7 & 0.21& 0.85& \textbf{0.14} & G22 & \textbf{8.84} & 33.29& 12.81\\
        G8 & 0.19& 0.81& \textbf{0.17} & G23 & \textbf{5.13} & 35.33& 6.9\\
        G9 & 0.02& 0.09& \textbf{0.02} & G24 & \textbf{13.03} & 45.33& 13.29\\
        G10 & \textbf{0.99} & 1.64& 3.51& G25 & \textbf{17.61} & 59.02& 18.97\\
        G11 & 7.95& 92.42& \textbf{7.42} & G26 & \textbf{29.11} & 82.92& 31.36\\
        G12 & 0.45& 1.12& \textbf{0.4} & G27 & \textbf{18.28} & 69.84& 19.63\\
        G13 & \textbf{0.38} & 0.94& 0.44& G28 & \textbf{6.49} & 58.82& 6.95\\
        G14 & \textbf{3.48} & 7.3& 4.13& G29 & \textbf{12.32} & 86.57& 13.46\\
        G15 & \textbf{57.88} & 107.74& OOT & G30 & \textbf{21.31} & 163.58& 25.44\\ \hline
    \end{tabular}}
\end{table}

\noindent \underline {\textbf{Performance on artificial datasets.}} We test our algorithms on synthetic graphs: Scale-Free (SF) graphs and Small-World (SW) graphs~\cite{rahman2024pseudo}. SF graphs are generated by the Barab\'{a}si–Albert model with parameter $w$, starting from a star graph of $w$ vertices and incrementally adding new vertices, each connected to $w$ existing vertices, until the graph contains $n$ vertices. SW graphs are generated by the Watts–Strogatz model with rewiring probability $p$. It begins with a cycle graph of $n$ vertices, where each vertex is connected to its $d$ nearest neighbors, forming a ring lattice. Then, for each edge $(u, v)$, the endpoint $v$ is randomly rewired to a non-adjacent vertex with probability $p$. In general, the graphs become denser as $w$ or $d$ increases. We fix $n = 10^6$, set $p = 0.2, \gamma=0.75$, and vary $w$ and $d$ in $\{10, 15, 20, 25, 30\}$. The time limit is set to 3 hours (10,800 seconds). We report the running times in the Table~\ref{table:artificial-datasets}.

We observe that our \texttt{IterQC} algorithm consistently outperforms all baseline methods. The running time of all algorithms increases with larger $w$ or $d$, as higher values of these parameters lead to denser graphs. Moreover, the performance gains of \texttt{IterQC} become more pronounced as $w$ increases in SF graphs and $d$ decreases in SW graphs. This is because a larger $w$ in SF graphs results in denser local structures, where the preprocessing step can effectively reduce the graph size. In contrast, while increasing $d$ in SW graphs adds more edges, the uniform structural properties limit the pruning effectiveness, resulting in less speedup.

\begin{table}[t]
    \centering
    \captionsetup{font=small}
    \caption{Runtime performance (in seconds) of \texttt{IterQC}, \texttt{FastQC}, and \texttt{DDA} on artificial datasets with $\gamma = 0.75$.}
    \label{table:artificial-datasets}
    \vspace{-0.15in}
    \scalebox{0.8}{
    \begin{tabular}{l|l l l|l|l l l}
    \hline
        \multicolumn{4}{c|}{\textbf{SF}} & \multicolumn{4}{c}{\textbf{SW}} \\
\hline
        $w$ & \texttt{IterQC} & \texttt{FastQC} & \texttt{DDA} & $d$ & \texttt{IterQC} & \texttt{FastQC} & \texttt{DDA} \\ \hline
        10 & 0.14 & 9.59 & OOT & 10 & 0.18 & 4.44 & OOT \\ 
        15 & 0.30 & 18.56 & OOT & 15 & 0.26 & 4.66 & OOT \\ 
        20 & 0.52 & 32.66 & OOT & 20 & 0.35 & 5.39 & OOT \\ 
        25 & 0.63 & 57.33 & 1.06 & 25 & 0.88 & 6.05 & OOT \\ 
        30 & 0.93 & 146.74 & OOT & 30 & 1.83 & 7.29 & OOT \\ \hline
    \end{tabular}}
\end{table}

\noindent \underline {\textbf{Memory consumption.}} We evaluate the memory usage of \texttt{IterQC} and two baseline algorithms across 30 representative graphs, as reported in Table~\ref{table:memory}. Memory usage is measured as the peak memory consumption during execution. For baseline methods that encountered timeouts, we record the maximum memory usage observed prior to termination.
We observe that the memory usage is comparable among all algorithms. In addition, we can prove that \texttt{IterQC} has the space complexity of $O(m+n)$, which is the same as the baselines and is linear with respect to the size of the input graph. The experimental results are aligned with the theoretical analysis.

\begin{table}[h]
    \centering
    \captionsetup{font=small}
    \caption{Memory consumption (in GB) of \texttt{IterQC}, \texttt{FastQC}, and \texttt{DDA} on 30 instances with $\gamma = 0.75$.}
    \label{table:memory}
    \vspace{-0.15in}
    \scalebox{0.8}{
    \begin{tabular}{l|l l l|l|l l l}
    \hline
        ID & \texttt{IterQC} & \texttt{FastQC} & \texttt{DDA} & ID & \texttt{IterQC} & \texttt{FastQC} & \texttt{DDA} \\ \hline
        G1 & \textbf{0.0} & \textbf{0.0}& 0.01& G16 & 1.73& 6.06& \textbf{0.92} \\  
        G2 & \textbf{0.0} & \textbf{0.0}& 0.02& G17 & \textbf{0.11} & 0.2& 0.12\\  
        G3 & \textbf{0.03} & 0.09& 0.13& G18 & 3.63& 11.56& \textbf{2.46} \\  
        G4 & \textbf{0.01} & \textbf{0.01}& \textbf{0.01}& G19 & 1.08& 16.14& \textbf{0.48} \\  
        G5 & \textbf{0.05} & 0.15& 0.09& G20 & \textbf{0.57} & 2.06& 0.59\\  
        G6 & 0.1& \textbf{0.07} & 0.12& G21 & \textbf{0.88} & 3.19& 0.95\\  
        G7 & \textbf{0.03} & 0.1& 0.04& G22 & \textbf{1.11} & 3.69& 16.17\\  
        G8 & \textbf{0.02} & 0.06& 0.09& G23 & 1.13& 3.93& \textbf{1.09} \\  
        G9 & \textbf{0.01} & \textbf{0.01}& 0.02& G24 & \textbf{1.46} & 4.89& 21.1\\  
        G10 & \textbf{0.02} & 0.05& 0.06& G25 & 4.66& 16.14& \textbf{1.88} \\  
        G11 & \textbf{0.09} & 0.25& 0.16& G26 & \textbf{2.45} & 7.54& 32.57\\  
        G12 & \textbf{0.04} & 0.13& 0.13& G27 & \textbf{1.95} & 6.48& 29.26\\  
        G13 & \textbf{0.04} & 0.06& 0.6& G28 & \textbf{1.87} & 6.69& 1.91\\  
        G14 & \textbf{0.17} & 0.21& 0.8& G29 & \textbf{3.75} & 13.56& 3.98\\  
        G15 & \textbf{0.22} & 0.74& 0.45& G30 & \textbf{5.21} & 18.99& 9.17\\ \hline 
    \end{tabular}}
\end{table}

\end{document}